\definecolor{myellow}{RGB}{220,220,0}
\theoremstyle{plain}
\newtheorem{thm}{Theorem}[section]
\theoremstyle{definition}
\newtheorem{prop}[thm]{Proposition}
\theoremstyle{remark}
\newtheorem*{rem}{Remark}
\newtheoremstyle{mydef}
	{\topsep}   
    {\topsep}   
    {}  
    {0pt}       
    {\bfseries} 
    {.}         
    {5pt plus 1pt minus 1pt} 
    {}          
\theoremstyle{mydef}
\newtheorem{definition}[thm]{Definition}
\tikzset{->-/.style={decoration={
  markings,
  mark=at position .5 with {\arrow{>}}},postaction={decorate}}}
\begin{document}

\preprint{APS/123-QED}

\title{Sequential topology: iterative topological phase transitions in finite chiral structures}

\author{Maxine M. McCarthy}
\email[Corresponding author: ]{maxine.mccarthy@mpl.mpg.de}
\affiliation{%
 Physics and Astronomy, School of Mathematical and Physical Sciences, University of Sheffield, Sheffield S3 7RH, UK
}%
\affiliation{%
 Max Planck Institute for the Science of Light, Staudstraße 2, 91058 Erlangen, Germany
}%

\author{D. M. Whittaker}
\affiliation{%
 Physics and Astronomy, School of Mathematical and Physical Sciences, University of Sheffield, Sheffield S3 7RH, UK
}%

\date{May 15, 2025}

\begin{abstract}
We present theoretical and experimental results probing the rich topological structure of arbitrarily disordered finite tight binding Hamiltonians with chiral symmetry. We extend the known classification by considering the topological properties of phase boundaries themselves. That is, can Hamiltonians that are confined to being topologically marginal, also have distinct topological phases? For chiral structures, we answer this in the affirmative, where we define topological phase boundaries as having an unavoidable increase in the degeneracy of real space zero modes. By iterating this question, and considering how to enforce a Hamiltonian to a phase boundary, we give a protocol to find the largest dimension subspace of a disordered parameter space that has a certain order degeneracy of zero energy states, which we call \textit{sequential topology}. We show such degeneracy alters localisation and transport properties of zero modes, allowing us to experimentally corroborate our theory using a state-of-the-art coaxial cable platform. Our theory applies to systems with an arbitrary underlying connectivity or disorder, and so can be calculated for any finite chiral structure. Technological and theoretical applications of our work are discussed.
\end{abstract}

\maketitle

\section{Introduction}\label{Intro}

\noindent Topology is a pillar of modern condensed matter physics, where novel and unusual disorder-resistant behaviour can be associated to a system with topologically non-trivial properties. For example, boundary localised states \cite{ExpLoc}, that can exhibit scatter free directional transport \cite{QH1,QH2,QH3}, and are anticipated to be non-trivial in up to 90\% of materials \cite{TQC1,TQC2,TQC3}. Typically such phenomena are a consequence of a Hamiltonian with a gapped bulk having gapless boundary states. The presence and behaviour of such states are well predicted with the bulk-boundary correspondence (BBC) \cite{10foldway,FragileTopology,ExpLoc,Ktheory}. \\ 
\indent The BBC applies exactly only in the (infinite) thermodynamic limit. In a finite system topological phenomena converge to their predicted behaviour exponentially with system size, so that there exists (an exponentially vanishing \cite{ExpConvergence,NonComTorus,RobustnessThermoLim,NonComIndices}) finite disorder at which the BBC is effectively lost. That is, gapless modes may become gapped and lose their associated behaviour. Such loss of the BBC is most prevalent in very small (10s to 100s of sites) structures with very strong disorder, or those with a highly random network topology. Although the BBC is well applied to such systems at low disorder, many cutting edge experimental platforms are small enough for this regime to be of importance, for instance optomechanical arrays \cite{OptomechanicalCombDevice}, micropillar polariton devices \cite{polariton1,polariton2}, acoustic metamaterials \cite{acoustics1,acoustics2}, mechanical systems \cite{Mechanical1}, topoelectric circuits \cite{Topoelectic0,Topoelectric1,Topoelectric2}, and coaxial cable networks \cite{cables1,cables2,cables3}.

\indent We have previously explored one approach to studying topological phases in finite chiral structures at arbitrarily strong disorder \cite{classification1}, for which an extension beyond chiral symmetry is the subject of upcoming work. 
Under our approach two (real-space) Hamiltonians are considered topologically distinct if and only if any continuous evolution of hopping terms that maps between them involves a gap closure in the energy spectrum. Our previous approach assumes algebraic independence of all hopping terms, so that any disordered hopping distribution of a particular Hamiltonian can be studied. Such a Hamiltonian is confined to live in a parameter space $\xi$ given by the tuple of each hopping amplitude $\xi=(u,v,\cdots)$, with an example graphically displayed in Fig. \ref{FiniteZerothStepTopology}. 
However if hopping terms are dependant on one another, this approach has the potential to miss important physical behaviour. For example topological bubbles \cite{FiniteSiteTop}, convergence of edge states \cite{Realspace1,Pseudospectra,Realspace3,RobustnessThermoLim}, 
and chiral transport \cite{Realspace2} 
predicted by other approaches to the classification of finite media, as well as the localisation-delocalisation transitions we discuss presently.

\indent Starting from a tight binding Hamiltonian with all independent hopping terms, in this paper we study how the topological classification of a structure is altered as we begin to introduce algebraic dependence. This allows us to constrain Hamiltonians to a topological phase boundary, and ask if that phase boundary itself has a topologically non-trivial classification. This question can be iterated by increasing the number of algebraic constraints on a Hamiltonian. Mathematically this corresponds to composing functions applied to the hopping terms of the Hamiltonian. In this setting, we define topological phase boundaries as those where there is an unavoidable increase in the degeneracy of zero energy states (for instance from a $2n$ degeneracy to a $2n+2$ degeneracy). The order of degeneracy of a phase boundary then increases with the number of applied constraints. Formally, sequential topology classifies constrained subspaces of the parameter space $\xi$. \\
\indent Herein we focus on topological properties of phase boundaries as most other choices of constraints will reveal the same classification as in Ref. \cite{classification1}. Despite our choice, we do note that other topologically interesting choices do exist, although alternative choices of constraints are beyond the scope of this work.

We found that applying a sequence of constraints reveals a rich classification, beyond the starting point with all independent hopping terms, which we call \textit{sequential topology}. We focus on systems with chiral symmetry in this paper so Hamiltonians have an energy spectrum which is symmetric around zero energy, however the methods we describe are readily generalisable to systems with other symmetries.

\begin{figure}
\centering
\begin{tikzpicture}
\node (0) at (0,0) {\includegraphics[width=0.8\linewidth]{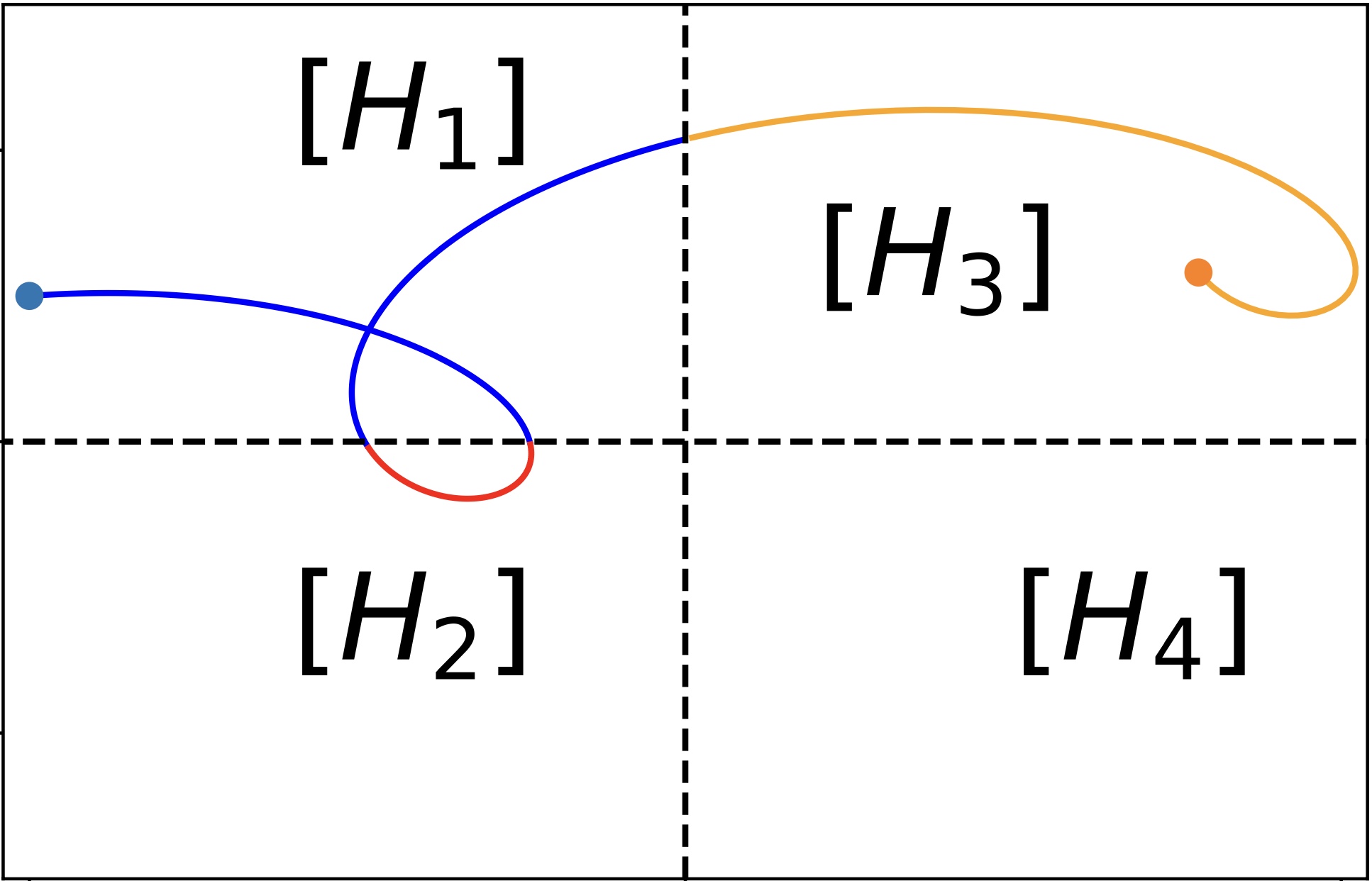}};
\node (a) at (0,-2.5) {\Large $a$};
\node[rotate=90] (b) at (-3.8,0) {\Large $b$};
\end{tikzpicture}
\caption{A slice of the parameter space for a structure with the zeroth step classification $2\mathbb{Z}_2$. This 
slice is defined by keeping all the hopping terms constant, apart from one in each
section,  
$a$ and $b$, which are allowed to evolve independently. 
$[H_x]$ denotes a set of topologically equivalent Hamiltonians, and the dashed 
lines denote phase boundaries. The path shown  undergoes phase transitions between 
the phases $[H_1]$, $[H_2]$, and $[H_3]$, where each phase transition is identified
by the appearance of a pair of unavoidable degenerate zero energy states.}
\label{FiniteZerothStepTopology}
\end{figure}

\indent Remarkably we find for many real finite chiral Hamiltonians, there is a near universal sequence of topological classifications when composing a sequence of algebraic constraints on the Hamiltonian. We choose these constraints in a way that minimise the necessary number to reveal a new topological classification. This may be represented as a sequence which we call the \textit{sequential classification},
\begin{widetext}
\begin{equation}
\begin{tikzcd}
N\mathbb{Z}_2 \arrow{r} & 0 \arrow{r} & M\mathbb{Z}_2 \arrow{r} & 0 \arrow{r} & 0 \arrow{r} & P\mathbb{Z}_2 \arrow{r} & 0 \arrow{r} & 0 \arrow{r} & 0 \arrow{r} & R\mathbb{Z}_2 \arrow{r} & 0 \arrow{r} & \cdots
\end{tikzcd}
\end{equation}
\end{widetext}
where each horizontal arrow denotes the addition of a constraint on the hopping terms. \\
\indent When the application of a number of constraints leads to a non-trivial classification, this is a consequence of the constrained subspace having multiple topological phases. Each phase is separated by a boundary with an increased degeneracy in zero energy states. A graphical example of constrained subspaces being topologically non-trivial is displayed in Fig. \ref{CartoonPic2}. It is possible to observe a sequential phase transition as the newly degenerate zero energy states undergo an unusual delocalisation at the phase boundary. This is experimentally observable by mapping out the local density of states (LDOS) or by measuring the transport properties of the system. Using these physical consequences, we experimentally corroborate our sequential classification for the structure in Fig. \ref{ExpStruc}. 
This also results in a number of possible technological applications of sequential topology as it allows for the precise control of the localisation and scattering properties of a system while only needing to maintain a minimum number of controlled parameters.

\begin{figure}
\begin{tikzpicture}[scale=0.75]
\draw[fill=purple,fill opacity=0.2] (0,0) -- (0,5) -- (3,7) -- (3,2) -- (0,0);

\draw[dashed] (0+4,0) -- (0+4,5) -- (3+4,7) -- (3+4,2) -- (0+4,0);

\draw[dashed] (0-4,0) -- (0-4,5) -- (3-4,7) -- (3-4,2) -- (0-4,0);

\draw[dashed] (0-4,0) -- (4,0);

\draw[dashed] (0-4,5) -- (4,5);
\draw[dashed] (3-4,7) -- (7,7);

\draw[dashed] (3-4,2)--(0,2);
\draw[dashed] (3,2) -- (7,2);

\draw (1.5,6) to [out=250,in=120] (2.5,4);
\draw (2.5,4) to [out=300,in=40] (1.5,1);

\node (E4) at (2.5+0.2,4.6-0.2) {\textcolor{blue}{$E_4$}};
\node (E4space) at (2.5,4) {\textcolor{blue}{\footnotesize{$\bullet$}}};

\node (X2) at (1.2,4.6) {$[H_1^{cc}]$};
\draw (X2)--(2.04,4.6);

\node (X2) at (1.6,3) {$[H_2^{cc}]$};
\draw (X2)--(2.7,3);

\node (H1) at (-0.5,4) {$[H_1]$};

\node[purple] (H2) at (1,2) {$[H_1^c]$};

\node[purple] (H3) at (2.5,5.5) {$[H_2^c]$};

\node (H4) at (3.5,4) {$[H_2]$};

\node (xi) at (1.5-0.1,6.2) {$\overbrace{\hspace{19.7em}}^{\xi}$};

\node[rotate=34] (X_1) at (1.54,6.2-5.4) {$\underbrace{\hspace{8.7em}}$};
\node at (1.8,0.4) {$|H|=0$};

\end{tikzpicture}

\caption{A slice of a parameter space $\xi$ of a Hamiltonian $H$. The equivalence classes are denoted by square brackets, with the superscript denoting the number of constraints applied to the Hamiltonian. The pink surface corresponds to a phase boundary separating entirely unconstrained Hamiltonians $H$. Hamiltonians $H^c$ are confined to the pink surface and separated with the phase boundary corresponding to the black curve. Finally, Hamiltonians confined to the black curve $H^{cc}$ are separated by the phase boundary corresponding to the blue dot.}
\label{CartoonPic2}
\end{figure}
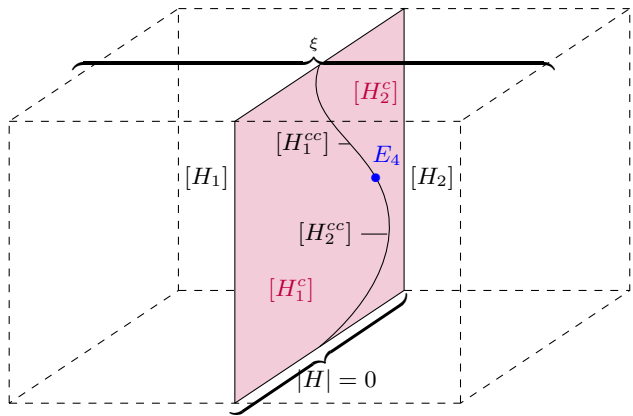

\indent We realised the structure of Fig. \ref{ExpStruc} with a coaxial cable network, which---at radio frequencies---behaves as a tight binding system \cite{cables1,cables2,cables3,classification1,LinearPaper}. Single port measurements with a vector network analyser (VNA) allowed us to map out the local density of states (LDOS), experimentally demonstrating the expected localisation-delocalisation transition. With two-port transmittance measurements, we also observed our predicted transport in our system, which much like observed in Ref. \cite{cables1}, corresponded to a maximum in transmittance at zero energy.

\begin{figure}
\begin{tikzpicture}[scale=0.7]
\node (0) at (0,0) {$\circ$};
\node (1) at (1,1) {$\bullet$};
\node (2) at (1,-1) {$\bullet$};
\node (3) at (2,0) {$\circ$};

\draw (0)--(1)--(3)--(2)--(0);

\node (4) at (4,0) {$\circ$};
\node (5) at (5,1) {$\bullet$};
\node (6) at (5,-1) {{$\bullet$}};
\node (7) at (6,0) {$\circ$};

\draw (4)--(5)--(7)--(6)--(4);

\node (8) at (8,0) {$\circ$};
\node (9) at (9,1) {$\bullet$};
\node (10) at (9,-1) {$\bullet$};
\node (11) at (10,0) {$\circ$};

\draw (8)--(9)--(11)--(10)--(8);

\draw (1)--(4);
\draw (4)--(2);

\draw (7)--(9);
\draw (7)--(10);

\end{tikzpicture}
\caption{This structure was experimentally realised with a coaxial cable network, and was used to corroborate sequential topology. Details of experimental results are included in section \ref{LocsTransmission}.}
\label{ExpStruc}
\end{figure}
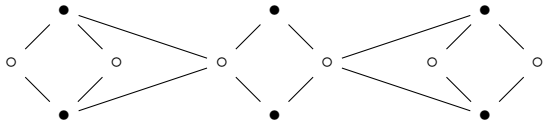

\indent An alternative approach to understanding sequential topology is to start with a system which entirely lacks disorder. 
For example the rotated square lattice of Fig. \ref{RotSquareFig}. For a finite system with no disorder, there are highly (and exactly) degenerate zero energy states (the number of which scales with the number of rows of sites). Allowing individual hopping terms to gain independence from one another reduces this degeneracy, opening up gaps around zero energy. As we increase the number of hopping terms that are independent, we eventually find an entirely gapped system with no zero energy states. Our approach of sequential topology gives a method of understanding precisely (in algebraic terms) how perturbing hopping terms break this degeneracy. Using the number of independent hopping terms as a proxy for disorder strength, sequential topology gives a mathematically precise way to understand how varying amounts of disorder alters topological classification in finite structures.

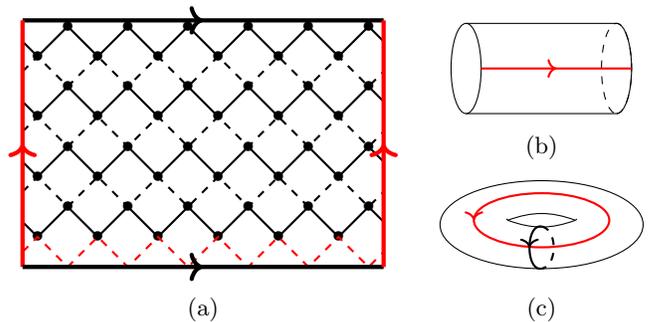
\begin{figure}

\begin{tikzpicture}
\node (ToroidalBC) at (4.5,-1.1) {
\begin{tikzpicture}[scale=0.45]

\draw (0,0) circle [x radius=3, y radius=1*1.3, rotate=0];
\draw[thick,->-,red] (0,0.1*1.3) circle [x radius=2, y radius=0.5*1.6, rotate=0];

\draw (1,0.1*1.3) to [out=160,in=20] (-1,0.1*1.3);

\draw (1,0.1*1.3) to [out=200,in=-20] (-1,0.1*1.3);

\node (a) at (1.3,0.2*1.3) {};
\node (b) at (-1.3,0.2*1.3) {};

\draw (1,0.1*1.3) to [out=20,in=200] (a);
\draw (-1,0.1*1.3) to [out=-20,in=-20] (b);

\draw[thick,->-] (0,-0.11*0.5) to [out=170,in=190] (0,-1*1.3);
\draw[thick,dashed] (0,-0.11*0.5) to [out=-10,in=10] (0,-1*1.3);

\end{tikzpicture}
};

\node (c) at (4.5,-2.2) {(c)};

\node (LoopedBC) at (4.5,1) {
\begin{tikzpicture}[scale=0.4,rotate=270]

\draw[dashed] (0,0) circle [x radius=1.5, y radius=0.5, rotate=0];

\draw (0,-5) circle [x radius=1.5, y radius=0.5, rotate=0];
\draw (-1.5,0) arc
    [
        start angle=180,
        end angle=0,
        x radius=1.5cm,
        y radius =0.5cm
    ] ;

\coordinate (a) at (1.5,0);
\coordinate (aa) at (1.5,-5);

\coordinate (b) at (-1.5,0);
\coordinate (bb) at (-1.5,-5);

\coordinate (c) at (0,+0.5);
\coordinate (cc) at (0,-5+.5);

\draw (a) -- (aa);
\draw (b) -- (bb);
\draw[red,thick,->-] (cc) -- (c);

\end{tikzpicture}
};

\node (b) at (4.5,-0.05) {(b)};

\node (rotsquare) at (0,0) {
\begin{tikzpicture}[scale=0.4]

\coordinate (0Z) at (-0.5,-0.5);
\coordinate (00) at (1,0);
\coordinate (01) at (3,0);
\coordinate (02) at (5,0);
\coordinate (03) at (7,0);
\coordinate (04) at (9,0);
\coordinate (05) at (11,0);
\coordinate (0P) at (11.5,-0.5);

\coordinate (1Z) at (-0.5,-0.5-1);
\coordinate (10) at (0,0-1);
\coordinate (11) at (2,0-1);
\coordinate (12) at (4,0-1);
\coordinate (13) at (6,0-1);
\coordinate (14) at (8,0-1);
\coordinate (15) at (10,0-1);
\coordinate (1P) at (11.5,-0.5-1);

\node (n00) at (1,0) {$\bullet$};
\node (n01) at (3,0) {$\bullet$};
\node (n02) at (5,0) {$\bullet$};
\node (n03) at (7,0) {$\bullet$};
\node (n04) at (9,0) {$\bullet$};
\node (n05) at (11,0) {$\bullet$};

\node (n10) at (0,0-1) {$\bullet$};
\node (n11) at (2,0-1) {$\bullet$};
\node (n12) at (4,0-1) {$\bullet$};
\node (n13) at (6,0-1) {$\bullet$};
\node (n14) at (8,0-1) {$\bullet$};
\node (n15) at (10,0-1) {$\bullet$};

\draw[thick] (0Z) -- (10) -- (00) -- (11) -- (01) -- (12) -- (02) -- (13) -- (03) -- (14) -- (04) -- (15) -- (05) -- (0P);

\coordinate (2Z) at (-0.5,-0.5-2);
\coordinate (20) at (1,0-2);
\coordinate (21) at (3,0-2);
\coordinate (22) at (5,0-2);
\coordinate (23) at (7,0-2);
\coordinate (24) at (9,0-2);
\coordinate (25) at (11,0-2);
\coordinate (2P) at (11.5,-0.5-2);

\coordinate (3Z) at (-0.5,-0.5-3);
\coordinate (30) at (0,0-3);
\coordinate (31) at (2,0-3);
\coordinate (32) at (4,0-3);
\coordinate (33) at (6,0-3);
\coordinate (34) at (8,0-3);
\coordinate (35) at (10,0-3);
\coordinate (3P) at (11.5,-0.5-3);

\node (n00) at (1,0-2) {$\bullet$};
\node (n01) at (3,0-2) {$\bullet$};
\node (n02) at (5,0-2) {$\bullet$};
\node (n03) at (7,0-2) {$\bullet$};
\node (n04) at (9,0-2) {$\bullet$};
\node (n05) at (11,0-2) {$\bullet$};

\node (n10) at (0,0-3) {$\bullet$};
\node (n11) at (2,0-3) {$\bullet$};
\node (n12) at (4,0-3) {$\bullet$};
\node (n13) at (6,0-3) {$\bullet$};
\node (n14) at (8,0-3) {$\bullet$};
\node (n15) at (10,0-3) {$\bullet$};

\draw[thick] (2Z) -- (30) -- (20) -- (31) -- (21) -- (32) -- (22) -- (33) -- (23) -- (34) -- (24) -- (35) -- (25) -- (2P);

\draw[dashed, thick] (1Z) -- (10) -- (20) -- (11) -- (21) -- (12) -- (22) -- (13) -- (23) -- (14) -- (24) -- (15) -- (25) -- (1P);

\coordinate (4Z) at (-0.5,-0.5-4);
\coordinate (40) at (1,0-4);
\coordinate (41) at (3,0-4);
\coordinate (42) at (5,0-4);
\coordinate (43) at (7,0-4);
\coordinate (44) at (9,0-4);
\coordinate (45) at (11,0-4);
\coordinate (4P) at (11.5,-0.5-4);

\coordinate (5Z) at (-0.5,-0.5-5);
\coordinate (50) at (0,0-5);
\coordinate (51) at (2,0-5);
\coordinate (52) at (4,0-5);
\coordinate (53) at (6,0-5);
\coordinate (54) at (8,0-5);
\coordinate (55) at (10,0-5);
\coordinate (5P) at (11.5,-0.5-5);

\node (n00) at (1,0-4) {$\bullet$};
\node (n01) at (3,0-4) {$\bullet$};
\node (n02) at (5,0-4) {$\bullet$};
\node (n03) at (7,0-4) {$\bullet$};
\node (n04) at (9,0-4) {$\bullet$};
\node (n05) at (11,0-4) {$\bullet$};

\node (n10) at (0,0-5) {$\bullet$};
\node (n11) at (2,0-5) {$\bullet$};
\node (n12) at (4,0-5) {$\bullet$};
\node (n13) at (6,0-5) {$\bullet$};
\node (n14) at (8,0-5) {$\bullet$};
\node (n15) at (10,0-5) {$\bullet$};

\draw[thick] (4Z) -- (50) -- (40) -- (51) -- (41) -- (52) -- (42) -- (53) -- (43) -- (54) -- (44) -- (55) -- (45) -- (4P);

\draw[dashed, thick] (3Z) -- (30) -- (40) -- (31) -- (41) -- (32) -- (42) -- (33) -- (43) -- (34) -- (44) -- (35) -- (45) -- (3P);

\coordinate (6Z) at (-0.5,-0.5-6);
\coordinate (60) at (1,0-6);
\coordinate (61) at (3,0-6);
\coordinate (62) at (5,0-6);
\coordinate (63) at (7,0-6);
\coordinate (64) at (9,0-6);
\coordinate (65) at (11,0-6);
\coordinate (6P) at (11.5,-0.5-6);

\coordinate (7Z) at (-0.5,-0.5-7);
\coordinate (70) at (0,0-7);
\coordinate (71) at (2,0-7);
\coordinate (72) at (4,0-7);
\coordinate (73) at (6,0-7);
\coordinate (74) at (8,0-7);
\coordinate (75) at (10,0-7);
\coordinate (7P) at (11.5,-0.5-7);

\node (n00) at (1,0-6) {$\bullet$};
\node (n01) at (3,0-6) {$\bullet$};
\node (n02) at (5,0-6) {$\bullet$};
\node (n03) at (7,0-6) {$\bullet$};
\node (n04) at (9,0-6) {$\bullet$};
\node (n05) at (11,0-6) {$\bullet$};

\node (n10) at (0,0-7) {$\bullet$};
\node (n11) at (2,0-7) {$\bullet$};
\node (n12) at (4,0-7) {$\bullet$};
\node (n13) at (6,0-7) {$\bullet$};
\node (n14) at (8,0-7) {$\bullet$};
\node (n15) at (10,0-7) {$\bullet$};

\draw[thick] (6Z) -- (70) -- (60) -- (71) -- (61) -- (72) -- (62) -- (73) -- (63) -- (74) -- (64) -- (75) -- (65) -- (6P);

\draw[dashed, thick] (5Z) -- (50) -- (60) -- (51) -- (61) -- (52) -- (62) -- (53) -- (63) -- (54) -- (64) -- (55) -- (65) -- (5P);

\coordinate (8Z) at (-0.5,-0.5-8);
\coordinate (80) at (1,0-8);
\coordinate (81) at (3,0-8);
\coordinate (82) at (5,0-8);
\coordinate (83) at (7,0-8);
\coordinate (84) at (9,0-8);
\coordinate (85) at (11,0-8);
\coordinate (8P) at (11.5,-0.5-8);

\draw[red,dashed, thick] (7Z) -- (70) -- (80) -- (71) -- (81) -- (72) -- (82) -- (73) -- (83) -- (74) -- (84) -- (75) -- (85) -- (7P);

\draw[ultra thick,->-] (-0.5,0.2) -- (11.5,0.2);
\draw[red,ultra thick,->-] (-0.5,-8) -- (-0.5,0.2);
\draw[ultra thick,->-] (-0.5,-8) -- (11.5,-8);
\draw[red,ultra thick,->-] (11.5,-8) -- (11.5,0.2);

\end{tikzpicture}
};

\node (a) at (0,-2.2) {(a)};

\end{tikzpicture}
\caption{A rotated square lattice (a), which, for all unit hopping terms and under cylindrical (b) or toroidal (c) boundary conditions the number of zero energy states is exactly twice the number of rows of sites connected with solid hopping terms. As hopping terms are perturbed, the degeneracy of zero energy states is reduced, until under strong enough disorder no zero energy states remain. Sequential topology provides an analytical way to understand exactly how degeneracy in zero energy modes is reduced with increasing hopping disorder. Note that under cylindrical boundary conditions (b) the two red edges are joined, and the red hopping terms are removed. Under toroidal boundary conditions (c) the red edges are joined and the black edges are joined with the red hopping terms. Note that the number of rows or columns may be increased.}
\label{RotSquareFig}
\end{figure}

\indent Beyond the coaxial cable networks discussed herein, there are a number of systems which are sufficiently controllable to experimentally explore sequential topology \cite{OptomechanicalCombDevice,polariton1,polariton2,acoustics1,acoustics2,Topoelectic0,Topoelectric1,Topoelectric2}. 
Or with the alternative view of sequential topology, it may be explored in systems where a specific lattice (such as the previously mentioned rotated square lattice) is put under small amounts of strain \cite{Strain} to perturb specific hopping terms.

\indent For some intuition into sequential topology, it should be noted that there are a number of similarities between sequential topology and higher order topology. 
This is because we are interested in the topological properties of a system where we iteratively increase the number of constraints. This results in studying topological properties of lower dimensional subspaces of the initial parameter space $\xi$, which can be understood in terms of higher topological invariants.

\indent For a $d$ dimensional $n$th order higher order topological insulator, one is interested in $d-n$ dimensional boundary states 
which requires studying higher topological invariants of momentum space  \cite{HOTI1,HOTI2,HOTI3,HOTI4,HOTI5,HOTI6}. 
This is largely a formal similarity, as sequential topology studies lower dimensional features of the parameter space $\xi$, and higher order topology studies lower dimensional features of momentum space \cite{HOTI_LowerDimPropertisOfMomentumSpace}. Indeed higher order topology results in lower dimensional boundary modes in a non-trivial phase. Conversely a phase boundary in sequential topology corresponds to a reduced localisation in zero energy modes.

\indent We begin by giving a brief overview of our approach to studying topology in finite structures before explaining sequential topology and how to calculate the sequential classification in section \ref{FiniteClassificationBackground}. 
We then discuss the physical consequences associated to phase transitions in sequential topology and give an experimental and theoretical case study in section \ref{LocsTransmission}. Finally in section \ref{concl} we conclude and give a brief outlook for further applications of sequential topology. 
A supplementary material is included for further experimental and mathematical details.

\section{Premise of a finite classification}\label{FiniteClassificationBackground}

\noindent Our theory of sequential topology builds directly off our previous approach to finite classification, and so we first explain some of the relevant features of the work explored in Ref. \cite{classification1}.

\indent To study the topology of an arbitrary finite structure, we define a tight binding Hamiltonian $H$ on a graph $G$ so that hopping terms of $H$ are non-zero only on edges of $G$, an example of which is illustrated in Fig. \ref{3SectionExample}. To allow disorder to be introduced to the structure, we first assume each hopping term is an algebraically independent variable taking value in some field (generally $\mathbb{C}$ or $\mathbb{R}$). In order to prevent discontinuous hopping evolution between two Hamiltonians $H_1$ and $H_2$ on $G$ we require that non-zero valued hopping terms cannot be set to zero (the structure cannot be cut) and no zero valued entries of $H$ can become non-zero (no new connections or sites may be introduced). \\
\indent To understand a topological phase transition in this setting, we use unavoidable non-adiabatic evolution to define topological phase boundaries. That is, if, for any continuous path evolving between two Hamiltonians $H(t=0) = H_0$ and $H(t=1) = H_1$ on $G$, there is an unavoidable gap closure in the energy spectrum at some point $H(0<t<1)$ we can be certain $H_0$ and $H_1$ are topologically distinct structures. \\
\indent We can interpret phase boundaries separating topologically distinct Hamiltonians in the parameter space $\xi$. Recall $\xi$ is given by the $\zeta$ dimensional tuple $(u,v,s,\cdots)$ of algebraically independent hopping terms. In $\xi$ phase boundaries correspond to unbounded $\zeta-1$ dimensional surfaces, illustrated by an example of a slice of a parameter space in Fig. \ref{FiniteZerothStepTopology}. \\
\indent Such a finite phase transition is the simplest to understand in chiral structures. A chiral Hamiltonian $H$ is defined by the existence of a unitary $C$ such that $CHC^{\dagger} = -H$. This operation leaves the eigenvalues of $H$ unchanged, ensuring a symmetry between its positive and negative eigenvalues: if $-\varepsilon$ is an eigenvalue of $H$ than so is $\varepsilon$. Evolving a non-singular Hamiltonian to be singular therefore introduces a pair of zero energy states, so that if the path from $H_0$ to $H_1$ unavoidably sets $|H|=0$ then the system has necessarily undergone a topological phase transition. \\ 
\indent In chiral structures, we may therefore use the determinant to study the number of different ways a Hamiltonian may pass through a topological phase boundary. This allows the classification problem to become algebraic: each hopping term is a variable, so the determinant $|H|$ is a polynomial. The number of phases of is then related to the irreducible factorisation of $|H|$ \cite{classification1}. \\
\indent In the case of a structure with all algebraically independent hopping terms, chiral symmetry implies a bipartite structure, where sites may be split in to two partite sets (for example, the black and white sets illustrated in Fig. \ref{ExpStruc} and in Fig. \ref{3SectionExample}). This ensures the Hamiltonian may be permuted to a form
\begin{equation}
H = \begin{pmatrix}
0 & Q \\
Q^{\dagger} & 0
\end{pmatrix}
\end{equation}
where $|H|=-|Q||Q^{\dagger}|$. As each hopping term is algebraically independent, any hopping terms that appears in $|Q|$ is not repeated. That is, the determinant $|Q|$ is linear in each hopping term in its expansion, and has a factorisation $|Q|=\prod_i^M |q_i|$ with $M$ irreducible factors $|q_i|$. 

\indent As a consequence of the factorisation of $|Q|$ such chiral structures have an $N\mathbb{Z}_2$ classification. This is because each irreducible factor has an independent topological invariant. A phase transition occurs by continuously evolving hopping terms through a map that takes $\text{Sign}[|Q|]\lessgtr 0$ to $\text{Sign}[|Q|]\gtrless 0$ which corresponds to one of the factors $|q_i|$ of $|Q|$ undergoing the map $\text{Sign}[|q_i|]\lessgtr 0$ to $\text{Sign}[|q_i|]\gtrless 0$.
Each factor that may be set to zero (which we call a \textit{non-trivial factor}: some factors might not be possible to set to zero) independently defines two topological phases. $N$ non-trivial factors therefore result in an $N\mathbb{Z}_2$ classification with a total of $2^N$ distinct topological phases. Note that $N\leq M$ as some factors may be trivial. \\
\indent We have previously shown that each irreducible factors $|q_i|$ of $|Q|=\prod_{i=1}^M |q_i|$ can be associated to a \textit{section} $g_i$ of the Hamiltonian, which is given by a certain substructure $g$ of $G$ \cite{classification1}. Each section corresponds to a distinct subset of sites, and all the hopping terms that connect the sites of that specific section. An example with three sections labelled $g_1, g2$ and $g_3$ is illustrated in Fig. \ref{3SectionExample}. In the same work we also rigorously proved that the determinant $|Q|$ has this factorisation if and only if the sites can be ordered in such a way that $Q$ has an upper triangular block form. That is,
\begin{equation} \label{TriangularBlockFormOfQ}
Q = \begin{pmatrix}
q_1 & \cdots & c_{1,M-1} & c_{1,M} \\
& \ddots & \vdots & \vdots \\
& & q_{M-1} & c_{M-1,M} \\
\text{\huge0} & & & q_M \\
\end{pmatrix}
\;
\end{equation}
where each entry is a matrix block. This form is of significant importance for calculating the sequential classification of a structure. In this form, each section $g_i$ then corresponds to the structure with a `sub-Hamiltonian'
\begin{equation}
h_i = \begin{pmatrix}
0 & q_i \\
q_i^{\dagger} & 0
\end{pmatrix}.
\end{equation}

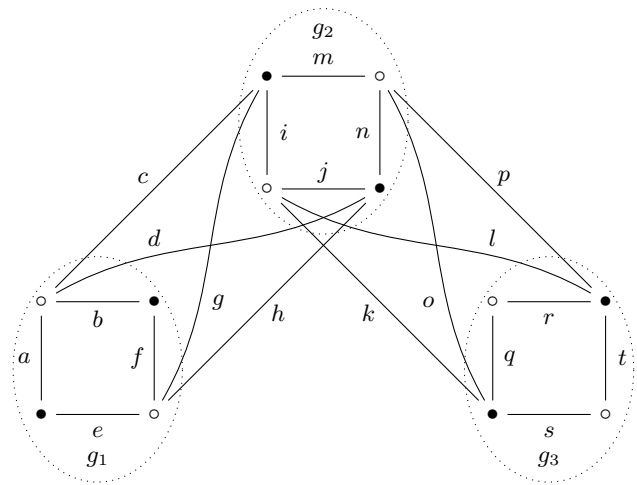
\begin{figure}
\begin{tikzpicture}[scale=1.5]

\node (a) at (0,0) {$\bullet$};
\node (b) at (1,0) {$\circ$};
\node (c) at (1,1) {$\bullet$};
\node (d) at (0,1) {$\circ$};

\draw (a)--(b)--(c)--(d)--(a);

\node (e) at (2,2) {$\circ$};
\node (f) at (3,2) {$\bullet$};
\node (g) at (3,3) {$\circ$};
\node (h) at (2,3) {$\bullet$};

\draw (e)--(f)--(g)--(h)--(e);

\node (i) at (4,0) {$\bullet$};
\node (j) at (5,0) {$\circ$};
\node (k) at (5,1) {$\bullet$};
\node (l) at (4,1) {$\circ$};

\draw (i)--(j)--(k)--(l)--(i);

\draw (b)--(f);
\draw (d)--(h);

\draw (b) to [out=0+45+22.5-7.5,in=180+45+22.5-7.5] (h);
\draw (d) to [out=0+45-22.5+7.5,in=180+45-22.5+7.5] (f);

\draw (e)--(i);
\draw (g)--(k);

\draw (e) to [out=0-45+22.5-7.5,in=180-45+22.5-7.5] (k);
\draw (g) to [out=0-45-22.5+7.5,in=180-45-22.5+7.5] (i);

\node (A) at (-0.15,0.5) {$a$};
\node (B) at (0.5,1-0.15) {$b$};
\node (F) at (1-0.15,0.5) {$f$};
\node (E) at (0.5,-0.15) {$e$};

\node (Q) at (4+0.15,0.5) {$q$};
\node (R) at (4+0.5,1-0.15) {$r$};
\node (T) at (4+1+0.15,0.5) {$t$};
\node (S) at (4+0.5,-0.15) {$s$};

\node (I) at (2+0.15,2+0.5) {$i$};
\node (M) at (2+0.5,2+1+0.15) {$m$};
\node (N) at (2+1-0.15,2+0.5) {$n$};
\node (J) at (2+0.5,2+0.15) {$j$};

\node (C) at (1-0.1,2+0.1) {$c$};
\node (D) at (1,1+0.5+0.07) {$d$};
\node (G) at (1+0.5+0.07,1) {$g$};
\node (H) at (2+0.1,1-0.1) {$h$};

\node (P) at (4+0.1,2+0.1) {$p$};
\node (L) at (4,1+0.5+0.07) {$l$};
\node (O) at (3+0.5-0.07,1) {$o$};
\node (K) at (3-0.1,1-0.1) {$k$};

\draw[dotted] (0.5,0.4) circle [x radius=0.75, y radius=1, rotate=0];
\node (q1) at (0.5,-0.4) {$g_1$};

\draw[dotted] (4+0.5,0.4) circle [x radius=0.75, y radius=1, rotate=0];
\node (q3) at (4+0.5,-0.4) {$g_3$};

\draw[dotted] (2+0.5,3-0.4) circle [x radius=0.75, y radius=1, rotate=0];
\node (q2) at (2+0.5,3+0.4) {$g_2$};
\end{tikzpicture}
\caption{A structure with three sections, denoted and labelled by the dotted ellipses. Each section is non-trivial giving a zeroth step classification of $3\mathbb{Z}_2$. The sequential classification of this structure is given in the diagram \eqref{ExampleSequentialClassification}. Each independent hopping term is labelled, giving a tight binding model where $g_i$ has a corresponding chiral block $q_i$ which is explicitly defined in equation \eqref{Example5}.}
\label{3SectionExample}
\end{figure}

\indent We now wish to generalise this notion of topological phases, where phase boundaries (or further constrained subspaces of $\xi$) may themselves be topologically non-trivial. To understand such phases, consider a simplified picture of a finite topological phase transition, illustrated in Fig. \ref{CartoonPic2}. 
Any Hamiltonian being evolved along a path from on side of the pink surface in Fig. \ref{CartoonPic2} to the other side of the pink surface takes us through a topological phase transition, as this surface corresponds to a topological phase boundary with two zero energy states (that is, $|H|=0$). Now consider if we constrain our Hamiltonian to the pink surface itself, by implementing some algebraic constraint on the hopping terms. There may be a line on this surface (corresponding to four zero energy states) that splits it in to two topologically distinct regions. This ensures the surface corresponding to $|H|=0$ is also topologically non-trivial. \\
\indent We may repeat this process, constraining our Hamiltonian to the line of four zero energy states. If there is a point on this line (for instance, the blue point in Fig. \ref{CartoonPic2}, representing Hamiltonians with six zero energy states) then the line itself is topologically non-trivial. This gives a sequence of topological classifications, where increasing the number of constraints on the Hamiltonian results in a new classification of the structure. We may represent this sequence with the diagram
\begin{equation}
\mathbb{Z}_2 \rightarrow \mathbb{Z}_2 \rightarrow \mathbb{Z}_2
\end{equation}
where each arrow denotes the addition of a single constraint on the hopping terms of the Hamiltonian. \\
\indent It is important to note that sometimes multiple constraints are required to find a non-trivial classification. For example, if the pink surface of Fig. \ref{CartoonPic2} had a point of four zero energy states instead of a line. A Hamiltonian confined to an open line (on the pink surface) that through this point is still topologically non-trivial. We denote such a sequence as
\begin{equation} \label{ExampleSequence2TrivandNonTriv}
\mathbb{Z}_2 \rightarrow 0 \rightarrow \mathbb{Z}_2.
\end{equation}
This may seem like quite an artificial way to implement a topological sequence (why should we choose to constrain the Hamiltonian to a line that intercepts this point?), but is well motivated. As we will show, the sequences we study in this paper are given by the unconstrained topological classification of sub-structures of the original Hamiltonian, so are not arbitrarily chosen.

\indent We use the terminology \textit{number of steps} to denote the number of constrained hopping terms. That is, the $n$th step topology of a Hamiltonian $H$ is the topological classification of a Hamiltonian with $n$ constrained hopping terms. With this terminology a Hamiltonian with all independent hopping terms is classified with the \textit{zeroth step topology}. Although we focus on chiral symmetry in this work, our methods can be readily generalised to systems which lack chiral symmetry.

\indent In order to study sequential topology we generalise our approach to zeroth step topology to the full secular equation
\begin{equation}
|H-\varepsilon I| = \sum_n a_{2n}\lambda^{2n}
\end{equation}
where the restriction to even coefficients is a consequence of the structures' chiral symmetry. For a chiral Hamiltonian the zeroth step topology is found by studying $a_0 = |H|$. A higher steps' topology then corresponds to studying constraints on the hopping terms that solve $a_{2n}=0$ for all $1\leq n\leq m$ for some integer $m$.

\indent We satisfy $a_{2n}=0$ for all $n\leq m$ by imposing a sequence of constraints on the Hamiltonians hopping terms. These take the form of requiring that individual hopping terms satisfy a composition of rational functions we call \textit{constraint maps}. An example of a constraint map is one of the irreducible factors of the Hamiltonian. Forcing hopping terms to satisfy such a factor constrains the Hamiltonian to lie on the pink surface corresponding to $|H|=0$ in Fig. \ref{CartoonPic2}. Importantly all constraint maps are chosen to be an algebraically irreducible function, which is done to ensure all constrained subspaces are simple surfaces. That is, for a particular constrained subspace, then every point on that constrained subspace has the same dimension.

\indent Constraint maps are always analagous to solving $|H|=0$ in zeroth step topology, and indeed are directly related to zeroth step topology of sub-structures of $H$. To see this 
consider an $m$ site tight binding Hamiltonian $H$ with the secular equation $|H-\varepsilon I|$. 
Following the Laplace expansion of a determinant, the coefficient of the $n$th term of the secular equation $a_n$ can be found by summing over the non-zero determinants of the diagonal $(m-n)\times (m-n)$ submatrices of $H$ (that is, the $(m-n)$th principal minors of $H-\varepsilon I$). These are exactly the non-zero determinants of any substructure of $H$ with $n$ deleted sites. 
That is a higher steps' topology is precisely related to the zeroth step topology of substructures of $H$.

\subsection{Calculating a sequential classification} \label{CalcSeqClass}

\noindent We now present a general approach for calculating the sequential classification of a structure. In order to explain our approach, we work through an example to calculate the sequential classification of the structure in Fig. \ref{3SectionExample}. In particular, we will show that this structure has the sequential classification
\begin{equation}
3\mathbb{Z}_2 \rightarrow 0 \rightarrow \mathbb{Z}_2 \rightarrow 0 \rightarrow 0 \rightarrow \mathbb{Z}_2 \rightarrow 0.
\end{equation}
To do this, we will first derive all the relevant constraint maps, and from this infer a function that can calculate an arbitrary constraint map. We will then define a convention for the order in which we apply constraint maps. By using this convention, we find the minimum number of constraint maps necessary to find the topologically interesting subspaces of $\xi$. Full details of a general protocol to calculate the sequential classification of a structure are given in the Appendix section \ref{ConstriantMapProtocol}.

\indent The structure in Fig. \ref{3SectionExample} has three sections, and if each hopping term is restricted to the domain $\mathbb{R}^+$ has the zeroth step classification $3\mathbb{Z}_2$. The tight binding Hamiltonian for this structure is given by
\begin{equation} \label{Example5} \begin{split}
H = &\begin{pmatrix}
0 & Q \\
Q^{\dagger} & 0
\end{pmatrix}, \\
Q = \begin{pmatrix}
q_1 & C_{1,2} & 0 \\
0 & q_2 & C_{2,3} \\
0 & 0 & q_3
\end{pmatrix}& = \begin{pmatrix}
a & b & c & d & 0 & 0 \\
e & f & g & h & 0 & 0 \\
0 & 0 & m & n & p & o \\
0 & 0 & i & j & l & k \\
0 & 0 & 0 & 0 & r & q \\
0 & 0 & 0 & 0 & t & s \\
\end{pmatrix}.
\end{split}
\end{equation}
To begin, the first constraint map is defined by satisfying $a_0=|H|=0$. We may do this by choosing the constraint map $|q_1|=0$ which corresponds to the surface $af-be=0$, and gives two zero energy states. \\
\indent To get four zero energy states in this structure, two further additional constraint maps must be satisfied. The first is defined by setting another section to be singular. This is because $g_1$ can only host two zero energy states, so we get the constraint $|q_2|=0$ or $|q_3|=0$. For the purposes of this example, we take $|q_2|=0$. \\
\indent The second constraint arises from requiring that the zero energy states from both sections are well defined on the entire structure. We call such constraint maps \textit{inter-section constraint maps}. For example, from the section $g_1$ then $Q$ has the zero energy state,
\begin{equation}
\begin{pmatrix}
q_1 & C_{1,2} & C_{1,3} \\
0 & q_2 & C_{2,3} \\
0 & 0 & q_3
\end{pmatrix} \begin{pmatrix}
\psi_1^1 \\
0 \\ 
0
\end{pmatrix} = 0
\end{equation}
and from the section $g_2$
\begin{equation}\label{InterSectionContraints}
\begin{pmatrix}
q_1 & C_{1,2} & C_{1,3} \\
0 & q_2 & C_{2,3} \\
0 & 0 & q_3
\end{pmatrix} \begin{pmatrix}
\psi_2^1 \\
\psi_2^2 \\ 
0
\end{pmatrix} = 0.
\end{equation}
We denote by the subscript $j$ of $\psi^i_j$ the section the state originates from, and the superscript $i$ a remaining section the state is non-zero on. 
For the latter state to be well defined, the equation
\begin{equation} \label{3rdConstraintExample}
q_1\psi_2^1 + C_{1,2}\psi^2_2 = 0
\end{equation}
must be satisfied. In our example this gives rise to a second additional constraint map, defined by
\begin{equation} \label{InterSectionContraints2}
dfm-bhm+bgn-cfn = 0.
\end{equation}
To see the origin of this constraint map, notice that equation \eqref{3rdConstraintExample} has solutions when the vector $-C_{1,2}\psi^2_2$ is a linear combination of the column vectors of $q_1$. This is satisfied precisely by equation \eqref{InterSectionContraints2}. The coefficients of this linear combination then define the terms in $\psi_2^1$. \\
\indent Such a linear combination is possible when $-C_{1,2}\psi^2_2$ is in the column space of $q_1$. This means we can derive the relevant constraint map by replacing (any) column of $q_1$ with $-C_{1,2}\psi^2_2$ and finding the condition for when the resultant matrix is singular. That is, by labelling each column of $q_1$ with $q_1^k$ (where the superscript $k$ denotes the index of the column),
\begin{equation} \label{ConstraintMapDerivation} \begin{split}
\det \left[ q_1 - q_1^k  + \delta \right] = 0, \,\,\,\,\,\,\quad & \\
\delta = \begin{pmatrix}
0 & \cdots & 0 & C_{1,2}\psi^2_2 & 0 & \cdots & 0
\end{pmatrix} &
\end{split} \end{equation}
where $\delta$ is a matrix formed of column vectors, and is non-zero \textit{only} in the $k$th column. This is because the constraint map $|q_1|=0$ ensures that every column of $q_1$ is a linear combination of all the remaining columns \cite{classification1}. Deleting any column thus leads to a matrix with the same column space as $q_1$, and so $-C_{1,2}\psi^2_2$ is in this column space when $\det [q_1 - q_1^k  + \delta ]=0$. All inter-section constraint maps may be derived in an analagous way, with full details of their formula given using the protocol presented in the Appendix section \ref{ConstriantMapProtocol}. \\ 
\indent Satisfying these three total constraint maps gives a structure with four zero energy states. If all but one of these three constraint maps is satisfied, the structure has two zero energy states, and is confined to a surface in $\xi$ that is separated in to two distinct topological phases. The phase boundary corresponds to the final (initially unsatisfied) constraint. This gives the sequential classification of
\begin{equation} \label{First3Steps}
3\mathbb{Z}_2 \rightarrow 0 \rightarrow \mathbb{Z}_2.
\end{equation}
As a visiaul aid, this sequential classification is analogous to the parameter space in Fig. \ref{CartoonPic2} where the first constraint map forces the Hamiltonian on to the pink surface, the second on to the line, which then has two topological phases separated by the blue point. 

\indent We may go further and consider the constraints needed to get six zero energy states, using $g_3$ to generate the necessary constraint maps. Our first constraint is from $|q_3|=0$. We then need the zero energy states from $g_1$, $g_2$ and $g_3$ to be well defined on each section $g_1,g_2,g_3$. For this we get two more constraint maps,
\begin{equation}
\begin{split}
lns-jps+jot-knt&=0 \\
df-bh&=0
\end{split}
\end{equation}
for a total of six constraint maps for six zero energy states. As above, satisfying all but one of these constraints gives a Hamiltonian confined to a surface in $\xi$ which has two topologically distinct phases. In this structure it is not possible to introduce more than six zero energy states, and so the surface in $\xi$ corresponding to a Hamiltonian with six zero energy states is topologically trivial. This gives the complete sequential classification of this structure as
\begin{equation} \label{ExampleSequentialClassification}
3\mathbb{Z}_2 \rightarrow 0 \rightarrow \mathbb{Z}_2 \rightarrow 0 \rightarrow 0 \rightarrow \mathbb{Z}_2 \rightarrow 0.
\end{equation}
\indent As mentioned above, it is important to define an order in which constraint maps are applied. This is because it is possible to apply constraint maps in any order. This can alter the sequence of classifications, for instance by satisfying each inter section constraint map first, the resulting sequence would be
\begin{equation}
0 \rightarrow 0 \rightarrow 0 \rightarrow 3\mathbb{Z}_2 \rightarrow \mathbb{Z}_2 \rightarrow \mathbb{Z}_2 \rightarrow 0.
\end{equation}
To avoid ambiguity in the sequential classification, we use a convention defined as follows:
\begin{enumerate}
\item Start with a chiral structure confined to having $2n$ zero energy states.
\item Choose a non-singular section $g_n$.
\item Maintaining all previous zero energy states in the structure, find the minimum number of constraint maps necessary to provide two additional zero energy states originating from the section $g_n$.
\end{enumerate}
Under this convention, we find (for a specific section $g_n$) the largest dimension subspace of $2n+2$ zero energy states within the $2n$ zero energy state subspace. The codimension of this subspace is then read off simply as the number of constraint maps applied to the Hamiltonian. Applying this convention we get the sequential classification of diagram \eqref{ExampleSequentialClassification}, so that the two zero energy state subspace has codimension one, four zero energy subspace has codimension three, and six zero energy subspace has codimension six. \\
\indent It is no coincidence that (under our convention) the sequential classification is non-trivial when one less than a triangular number of constraint maps are satisfied, and in fact follows directly from the triangular block form of $Q$. To be unambiguous, the origin of the first constraint is from only requiring that $|q_1|=0$. The second and third constraints arise from requiring that $|q_2|=0$ and that the zero energy states from $g_2$ are well defined on $g_1$. The final three constraint maps are from $|q_3|=0$, and requiring the the zero energy states from $g_3$ are well defined on $g_1$ and $g_2$. 
This is the origin of the main theoretical results of this paper: For most chiral structures with $n-1$ singular sections and $2n-2$ zero energy states, there are precisely $n$ additional constraint maps that must be satisfied to introduce $2n$ zero energy states. The total number of constraints to give a chiral structure $2n$ zero energy states is therefore $\sum_{p=1}^n p$: the $n$th triangular number. A proof of this correspondence is given in the Appendix sections \ref{TriangularNumber} and \ref{ConstriantMapProtocol}. We represent this result with the sequence
\begin{widetext}
\begin{equation}
\begin{tikzcd}
N\mathbb{Z}_2 \arrow{r} & 0 \arrow{r} & M\mathbb{Z}_2 \arrow{r} & 0 \arrow{r} & 0 \arrow{r} & P\mathbb{Z}_2 \arrow{r} & 0 \arrow{r} & 0 \arrow{r} & 0 \arrow{r} & R\mathbb{Z}_2 \arrow{r} & 0 \arrow{r} & \cdots
\end{tikzcd}
\end{equation}
This sequence is readily extended to complex chiral Hamiltonians where there are two constraint maps for each step of the real case (one for the real part and one for the complex part of each hopping term). This gives the sequence
\begin{equation}
\begin{tikzcd}
0 \arrow{r} & N\mathbb{Z}_2 \arrow{r} & 0 \arrow{r} & 0 \arrow{r} & M\mathbb{Z}_2 \arrow{r} & 0 \arrow{r} & 0 \arrow{r} & 0 \arrow{r} & 0 \arrow{r} & P\mathbb{Z}_2 \arrow{r} & 0 \arrow{r} & \cdots
\end{tikzcd}
\end{equation}
\end{widetext}
where $N,M,P,R$ are integers. As each constraint map is related to the zeroth step topology of a Hamiltonians substructure, these integers are calculated in an analogous way to the zeroth step classification. Full details of a general protocol to calculate these integers are presented in the Appendix section \ref{ConstriantMapProtocol}. These sequences then terminate when a maximum degeneracy in zero modes is reached.

\indent Finally it is important to note that although for most chiral structures there is a correspondence between a triangular number of constraints and $2n$ zero energy states, there are exactly two cases where this does not hold. To see the origin of such cases, notice that another solution to the constraint map in equation \eqref{ConstraintMapDerivation} is when $\delta =0$. This can occur two ways. The first is when the underlying connectivity of the Hamiltonian ensures that if $q_i \psi_i^i = 0$ then $\psi_i^j$ is automatically zero and similarly if $q_j \psi_j^j = 0$ then $\psi_j^i$ is automatically zero. The second case is when a prior constraint map also sets $\delta=0$. Both cases are also discussed in the Appendix \ref{TriangularNumber}, with the prior case also discussed in greater detail for the experimental case study in section \ref{LocsTransmission}.

\section{The physics of sequential topology: localisation and transport}\label{LocsTransmission}

\noindent In this section, we discuss how iteratively increasing the number of zero energy states alters the localisation and transport properties of a structure. We begin by discussing delocalisation phenomenon while working through the classification of the structure in Fig. \ref{ExpCaseStudy}, before discussing how this affects transport. We then present our experimental results corroborating our sequential classification.

\begin{rem}
For the discussion in this section, it is useful to be able to attribute a particular zero energy state as originating from a specific section. In the case that a section $|q_i|=0$ then we say that the zero energy state on $H$ that satisfies $q_i\psi_i^i=0$ \textit{originates} from the section $g_i$.
\end{rem}

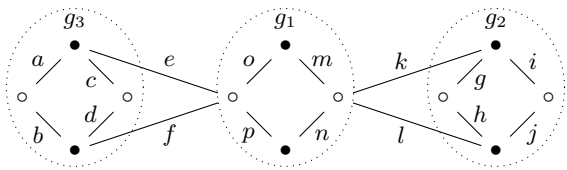
\begin{figure}
\begin{tikzpicture}[scale=0.7]
\node (0) at (0,0) {$\circ$};
\node (1) at (1,1) {$\bullet$};
\node (2) at (1,-1) {$\bullet$};
\node (3) at (2,0) {$\circ$};

\node (a) at (0.5-0.2,0.5+0.2) {$a$};
\node (c) at (1.5-0.2,0.5-0.2) {$c$};
\node (d) at (1.5-0.2,-0.5+0.2) {$d$};
\node (b) at (0.5-0.2,-0.5-0.2) {$b$};

\draw (0)--(1)--(3)--(2)--(0);

\node (4) at (4,0) {$\circ$};
\node (5) at (5,1) {$\bullet$};
\node (6) at (5,-1) {{$\bullet$}};
\node (7) at (6,0) {$\circ$};

\node (o) at (0.5-0.2+4,0.5+0.2) {$o$};

\node (m) at (1.5+0.2+4,0.5+0.2) {$m$};
\node (n) at (1.5+0.2+4,-0.5-0.2) {$n$};

\node (p) at (0.5-0.2+4,-0.5-0.2) {$p$};

\draw (4)--(5)--(7)--(6)--(4);

\node (8) at (8,0) {$\circ$};
\node (9) at (9,1) {$\bullet$};
\node (10) at (9,-1) {$\bullet$};
\node (11) at (10,0) {$\circ$};

\node (g) at (0.5+0.2+8,0.5-0.2) {$g$};
\node (h) at (1.5+0.2+8,0.5+0.2) {$i$};
\node (i) at (1.5+0.2+8,-0.5-0.2) {$j$};
\node (j) at (0.5+0.2+8,-0.5+0.2) {$h$};

\draw (8)--(9)--(11)--(10)--(8);

\draw (1)--(4);
\draw (4)--(2);

\draw (7)--(9);
\draw (7)--(10);

\node (e) at (3-0.2,0.5+0.2) {$e$};
\node (f) at (3-0.2,-0.5-0.2) {$f$};

\node (k) at (4+3+0.2,0.5+0.2) {$k$};
\node (l) at (4+3+0.2,-0.5-0.2) {$l$};

\draw[dotted] (1,0+0.2) circle [x radius=1.3, y radius=1.5, rotate=0];
\node (q1) at (1,1.45) {$g_3$};

\draw[dotted] (5,0+0.2) circle [x radius=1.3, y radius=1.5, rotate=0];
\node (q1) at (5,1.45) {$g_1$};

\draw[dotted] (9,0+0.2) circle [x radius=1.3, y radius=1.5, rotate=0];
\node (q1) at (9,1.45) {$g_2$};

\end{tikzpicture}
\caption{The structure of our theoretical and experimental case study. Each independent hopping term is labelled with an algebraic variable. Sections are labelled as $g_1$, $g_2$, and $g_3$, and encircled within each ellipse. The chiral matrix block $q_i$ corresponding to each section $g_i$ is shown explicitly in equation \eqref{HamiltonianForExperiment}.}
\label{ExpCaseStudy}
\end{figure}

\indent The structure in Fig. \ref{ExpCaseStudy} has the tight binding Hamiltonian
\begin{equation} \label{HamiltonianForExperiment} \begin{split}
H = &\begin{pmatrix}
0 & Q \\
Q^{\dagger} & 0
\end{pmatrix}, \\
Q = \begin{pmatrix}
q_1 & C_{1,2} & C_{1,3} \\
0 & q_2 & 0 \\
0 & 0 & q_3
\end{pmatrix}& = \begin{pmatrix}
m & n & k & l & 0 & 0 \\
o & p & 0 & 0 & e & f \\
0 & 0 & g & h & 0 & 0 \\
0 & 0 & i & j & 0 & 0 \\
0 & 0 & 0 & 0 & a & b \\
0 & 0 & 0 & 0 & c & d \\
\end{pmatrix}.
\end{split}
\end{equation}
For all positive real hopping terms this structure has three non-trivial sections corresponding to a $3\mathbb{Z}_2$ zeroth step classification. Two of these sections are entirely disconnected, so any zero energy states originating from $g_3$ automatically have no support on $g_2$ and simultaneously any zero energy states originating from $g_2$ automatically have no support on $g_3$. This is one of the two previously mentioned cases where fewer than a triangular number of constraints are necessary to increase the degeneracy of zero energy modes. Under our convention this gives two sequential classifications, of either
\begin{equation} \label{Seq1}
\begin{tikzcd}
3\mathbb{Z}_2 \arrow{r} & \mathbb{Z}_2 \arrow{r} & 0  \arrow{r} & 0 \arrow{r} & \mathbb{Z}_2 \arrow{r} & 0
\end{tikzcd}
\end{equation}
or with a different choice in the order of setting sections singular
\begin{equation} \label{Seq2}
\begin{tikzcd}
3\mathbb{Z}_2 \arrow{r} & 0 \arrow{r} & \mathbb{Z}_2 \arrow{r} & 0 \arrow{r} & \mathbb{Z}_2 \arrow{r} & 0.
\end{tikzcd}
\end{equation}
Both sequences satisfy our convention for the order of applying constraint maps. \\
\indent The difference between the two sequences arises because if $|q_1|=0$ then the corresponding zero energy state has no support on $q_2$ (and vice versa), reducing the number of necessary constraint maps. 
Setting $g_1$ and $g_2$ to being singular before $g_3$ gives, by our convention, the first sequence, and any other choice gives the second sequence. It is possible to map between the two sequences without altering the number of zero energy states by continuously evolving hopping terms (this changes which constraint maps are satisfied). 

\begin{rem}
As we can choose an order to apply constraint maps, any individual constraint map may be chosen to increase the degeneracy of zero energy states. It is therefore helpful to carefully define a sequential topological phase boundary. We consider a Hamiltonian to be on a topological phase boundary if enough constraint maps are satisfied for $2n$ zero energy states, but no additional constraint maps are satisfied.
\end{rem}

\indent On a topological phase boundary, there is an observable delocalisation of the zero energy states. In particular, if a section provides two zero energy states to a Hamiltonian, then zero energy states have 
non-zero support on every site of that section. This is because every constraint map is related to the zeroth step topology of a substructure of the Hamiltonian, which we have previously shown results in such a delocalisation \cite{classification1}. Experimental results confirming this generalisation to sequential topology are displayed in Fig. \ref{CoreGraphExpResults}, with the sections $g_1$ and $g_2$ being singular. \\
\indent The triangular form of $Q$ also ensures the zero energy states have an unusual spread throughout the rest of the structure. For convenience, we recall this form here. Consider a generic Hamiltonian $H$ with a chiral block
\begin{equation}
Q = \begin{pmatrix}
q_1 & \cdots & c_{1,j} & \cdots & c_{1,M} \\
& \ddots & \vdots & \vdots & \vdots \\
& & q_{j} & \cdots & c_{j,M} \\
& & & \ddots & \vdots \\
\text{\huge0} & & & & q_M \\
\end{pmatrix}
\;
\end{equation}
containing the hopping terms from black to white sites. If $|q_i|=0$ then the zero energy state originating on the black sites of $g_i$ has the form
\begin{equation} \label{AssumpSupport2}
Q\begin{pmatrix}
\psi_j^1 \\
\vdots \\
\psi_j^{j-1} \\
\psi_j^j \\
0 \\
\vdots \\
0
\end{pmatrix} = 0
\end{equation}
so is non-zero only on $g_i$ and sections $g_j$ for $i<j$. Conversely on the block $Q^{\dagger}$ containing hopping terms from the white to the black sites,
\begin{equation} \label{AssumpSupport2}
Q^{\dagger}\begin{pmatrix}
0 \\
\vdots \\
0 \\
\phi_j^j \\
\phi_j^{j+1} \\
\vdots \\
\phi_j^{M} \\
\end{pmatrix} = 0
\end{equation}
so that the state originating on the white sites of $g_j$ is non-zero only on the white sites of $g_j$ and sections $g_k$ such that $k>j$. \\
\indent Due to the triangular block form of $Q$ it is convenient to define a partial ordering of sections, where $g_i=g_j$ if $q_i$ and $q_j$ may be permuted to swap places on the diagonal of $Q$ without disrupting the triangular block form, otherwise $g_i<g_j$ if $i<j$. We say two sections are \textit{equal} if $g_i=g_j$ and that for $g_i<g_j$ that $g_j$ is \textit{upper} of $g_i$ and $g_i$ is \textit{lower} of $g_j$. \\
\indent The split in the compact localisation of zero energy states means that a non-singular section only hosts zero energy states on one sublattice, unless there is simultaneously a section above and a section below it that each provide two zero energy states to the structure. In the case of the structure in Fig. \ref{ExpCaseStudy} there are three sections, with the partial ordering $g_1<g_2=g_3$. Therefore the only way support of zero energy states may be found on both sublattice of a section in this structure, is if that section provides two zero energy states to the Hamiltonian. This gives an experimental route to probe the sequential topology of this structure: if the LDOS at zero energy is non-zero on both sublattices of a section, we know that section provides two zero energy states to the structure, and thus we can count the degeneracy of zero modes.

\subsection{Experimental corroboration of sequential topology}

\noindent We realise the structure in Fig. \ref{ExpCaseStudy} using a coaxial cable network, which if all cables are of the same transmittance time $\tau$ maps on to a tight binding model \cite{LinearPaper,classification1} (at radio frequencies). Sites correspond to junctions in the network, and cables to hopping terms. The value of a hopping term $H_{A,B}$ between two sites $A,B$ is related to the reciprocal of the impedance of the cable connecting the two sites. This allows highly controllable tight binding Hamiltonians to be experimentally realised. The Hamiltonian has solutions of $H\psi = \varepsilon\psi$ for an `energy' $\varepsilon=\cos\omega\tau$ with $\omega$ the driving frequency, and entries in the eigenstate $\psi$ scaled voltages at junctions in the network. \\
\indent Measurements on the coaxial cable network were performed with a two-port vector network analyser (a NanoVNA V2 Plus 4 Pro). States of the Hamiltonian can be mapped with a single port measurement of reflectance, which is used to find the local impedance and is proportional to the local density of states (LDOS). Two port measurements of transmittance allow us to find the ratio of a state on two sites, and can be used as an experimental signature of a structure being topologically marginal \cite{LinearPaper,classification1}.

The structures in Fig. \ref{LDOSexpsresults} follow the sequential classification in the diagram \eqref{Seq2}. Each structure satisfying the additional constraint map for each step in the sequence. The order of applying constraint maps was to set the hopping terms to allow $g_1$, then $g_2$, followed by $g_3$ to each provide zero energy states to the structure, with the exact constraint maps detailed in the Appendix section \ref{ExpDets}. The Appendix section \ref{ExpDets} also containts a similar set of experimental results that corroborate the sequential classification in the diagram \eqref{Seq1}.

All maxima in the LDOS where observed within an energy window of $\Delta\varepsilon = 0.096$ (corresponding to a frequency window of $7.02$MHz
) with maxima having a standard deviation of 0.02 (or 1.57 MHz) indicating our tight binding description models the coaxial cable system very well. LDOS were calculated directly from raw data, being integrated over the range $\varepsilon\in [-0.102,0.102]$. Data agreed well with the predicted compact localisation, as displayed in Fig. \ref{LDOSexpsresults}. Measurements on sites with zero support corresponded to an absence of a local maxima in the LDOS, with non-zero values found only due to the experimental spectra never truly going to zero. The significant non-zero value for one of the sites in the upper left panel of Fig. \ref{LDOSexpsresults} (a) was a consequence of the broadening of the LDOS spectra, where low energy states had significant support on this site.

\begin{figure}
\begin{tikzpicture}[scale=0.7]

\node (0) at (0-5,0-6) {};
\node (1) at (1-5,1-6) {};
\node (2) at (1-5,-1-6) {};
\node (3) at (2-5,0-6) {};


\draw (0) -- (1);
\draw[dashed] (1)--(3);
\draw (3)--(2);
\draw[dashed] (2)--(0);

\node (4) at (4-5,0-6) {};
\node (5) at (5-5,1-6) {};
\node (6) at (5-5,-1-6) {};
\node (7) at (6-5,0-6) {};


\draw (6)--(4)--(5);
\draw[dashed] (5)--(7)--(6);

\node (8) at (8-5,0-6) {};
\node (9) at (9-5,1-6) {};
\node (10) at (9-5,-1-6) {};
\node (11) at (10-5,0-6) {};


\draw (8)--(10)--(11);
\draw[dashed] (11)--(9)--(8);

\draw (1)--(4);
\draw[dashed] (4)--(2);

\draw[dashed] (7)--(9); 
\draw (7)--(10);

\draw[fill=white] (0-5,0-6) circle [x radius=0.5*0.453, y radius=0.5*0.453, rotate=0];
\draw[fill=black] (1-5,1-6) circle [x radius=0.5*0.0289, y radius=0.5*0.0289, rotate=0];
\draw[fill=black] (1-5,-1-6) circle [x radius=0.5*0.0332, y radius=0.5*0.0332, rotate=0];
\draw[fill=white] (2-5,0-6) circle [x radius=0.5*0.00486, y radius=0.5*0.00486, rotate=0];

\draw[fill=white] (0-5+4,0-6) circle [x radius=0.5*0.464, y radius=0.5*0.464, rotate=0];
\draw[fill=black] (1-5+4,1-6) circle [x radius=0.5*0.854, y radius=0.5*0.854, rotate=0];
\draw[fill=black] (1-5+4,-1-6) circle [x radius=0.5*0.854, y radius=0.5*0.854, rotate=0];
\draw[fill=white] (2-5+4,0-6) circle [x radius=0.5*0.161, y radius=0.5*0.161, rotate=0];

\draw[fill=white] (0-5+4+4,0-6) circle [x radius=0.5*0.916, y radius=0.5*0.916, rotate=0];
\draw[fill=black] (1-5+4+4,1-6) circle [x radius=0.5*0.334, y radius=0.5*0.334, rotate=0];
\draw[fill=black] (1-5+4+4,-1-6) circle [x radius=0.5, y radius=0.5, rotate=0];
\draw[fill=white] (2-5+4+4,0-6) circle [x radius=0.5*0.925, y radius=0.5*0.925, rotate=0];

\node (a) at (0,-8) {(a)};

\node (alpha) at  (2-5+0.3,0-6) {$\alpha$};
\node (beta) at  (0-5+4,0-6+0.45) {$\beta$};

\node at (0,-12) {
\includegraphics[scale=0.4]{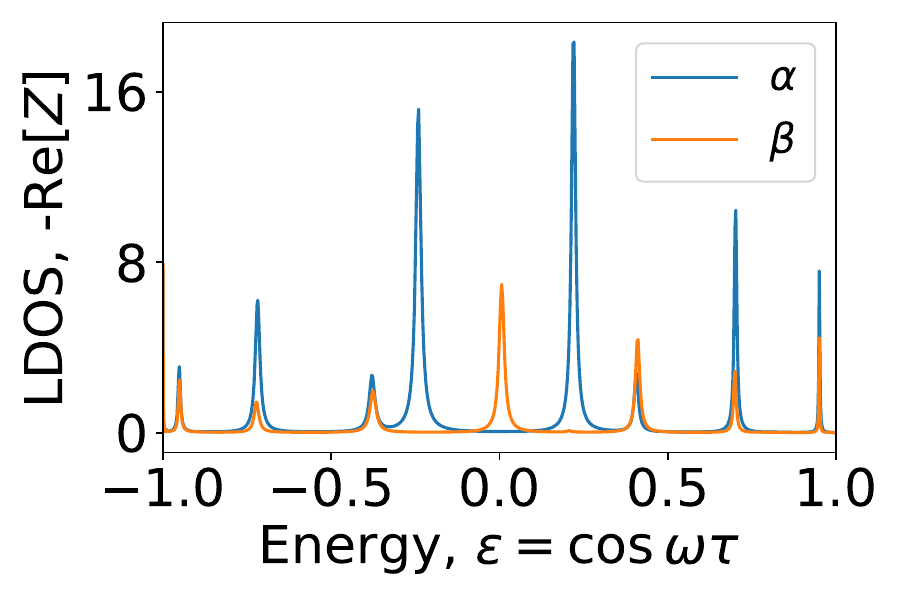}
};

\node (b) at (0,-15.5) {(b)};

\end{tikzpicture}

\caption{(a) Displays the experimentally measured LDOS on every site of a structure on a sequential topological phase boundary, with four zero energy states. LDOS were calculated directly from experimental data at $\varepsilon=0$, with the experimental spectra for the sites $\alpha$ and $\beta$ displayed in (b). The node diameter is proportional to the LDOS integrated over an energy range of $\varepsilon\in [-0.102,0.102]$ corresponding to a frequency range of 111MHz to 126MHz. Observe how the zero energy states have non-zero support on every site of $g_1$ and $g_2$, but only on one site of $g_3$, corroborating the prediction that on a sequential topological phase boundary all sites of singular sections support zero energy states. 50$\Omega$ RG58 cables are indicated with the dashed lines, and $93\Omega$ RG62 cables with the solid lines. The hopping terms were randomly selected from a binary distribution constrained so that the structure is on a sequential topological phase boundary, with four zero energy states.}
\label{CoreGraphExpResults}
\end{figure}

\begin{widetext}

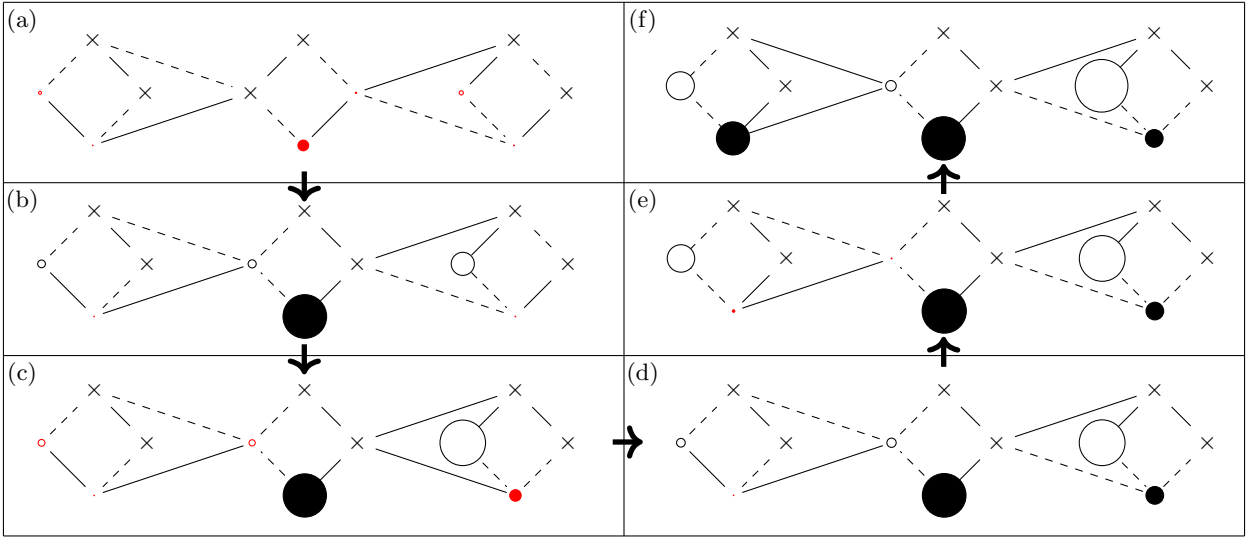
\begin{figure}

\begin{tikzpicture}

\draw (5.75,1.25+0.6) -- (5.75,-5.25);
\draw (-3+0.5,1.25+0.6) -- (-3+0.5,-5.25);
\draw (14,1.25+0.6) -- (14,-5.25);

\draw (-2.5,-5.25) -- (14,-5.25);
\draw (-2.5,1.25+0.6) -- (14,1.25+0.6);
\draw (-2.5,-1.7-1.15) -- (14,-1.7-1.15);

\draw (-2.5,-1.7-1.15+2.3) -- (14,-1.7-1.15+2.3);

\node (a) at (-3+0.75,1.25+0.35) {(a)};
\node (a) at (-3+0.75,-1.7-1.15+2.3-0.25) {(b)};
\node (a) at (-3+0.75,-1.7-1.15-0.25) {(c)};

\node (a) at (6,1.25+0.35) {(f)};
\node (a) at (6,-1.7-1.15+2.3-0.25) {(e)};
\node (a) at (6,-1.7-1.15-0.25) {(d)};


\draw [-To,line width=2pt] (5.75-0.15,-4) -> (5.75+0.25,-4);


\node (0) at (1+0.5+0.05,0+0.6+0.1) {

\begin{tikzpicture}[scale=0.7]

\node (0) at (0-5,0-6) {};
\node (1) at (1-5,1-6) {$\times$};
\node (2) at (1-5,-1-6) {};
\node (3) at (2-5,0-6) {$\times$};


\draw[dashed] (0) -- (1);
\draw (1)--(3);
\draw[dashed] (3)--(2);
\draw (2)--(0);

\node (4) at (4-5,0-6) {$\times$};
\node (5) at (5-5,1-6) {$\times$};
\node (6) at (5-5,-1-6) {};
\node (7) at (6-5,0-6) {};


\draw[dashed] (6)--(4);
\draw (4)--(5);
\draw[dashed] (5)--(7);
\draw (7)--(6);

\node (8) at (8-5,0-6) {};
\node (9) at (9-5,1-6) {$\times$};
\node (10) at (9-5,-1-6) {};
\node (11) at (10-5,0-6) {$\times$};


\draw[dashed] (8)--(10);
\draw (10)--(11);
\draw[dashed] (11)--(9);
\draw (9)--(8);

\draw[dashed] (1)--(4);
\draw (4)--(2);

\draw (7)--(9); 
\draw[dashed] (7)--(10);

\draw[color = red, fill=white] (0-5,0-6) circle [x radius=0.0287, y radius=0.0287, rotate=0]; 

\draw[color = red, fill=red] (1-5,-1-6) circle [x radius=0.0056, y radius=0.0056, rotate=0]; 

\draw[color=red,fill=red] (1-5+4,-1-6) circle [x radius=0.1021, y radius=0.1021, rotate=0]; 
\draw[color=red,fill=white] (2-5+4,0-6) circle [x radius=0.0126, y radius=0.0126, rotate=0]; 

\draw[color=red,fill=white] (0-5+4+4,0-6) circle [x radius=0.04, y radius=0.04, rotate=0]; 
\draw[color=red,fill=red] (1-5+4+4,-1-6) circle [x radius=0.007, y radius=0.007, rotate=0]; 

\end{tikzpicture}

};


\draw [-To,line width=2pt] (1.5,-1.7-1.15+2.3+0.15) -> (1.5,-1.7-1.15+2.3-0.25);

\node (1) at (1+0.5+0.07,-2+0.3+0.03) {

\begin{tikzpicture}[scale=0.7]

\node (0) at (0-5,0-6) {};
\node (1) at (1-5,1-6) {$\times$};
\node (2) at (1-5,-1-6) {};
\node (3) at (2-5,0-6) {$\times$};

\draw[dashed] (0) -- (1);
\draw (1)--(3);
\draw[dashed] (3)--(2);
\draw (2)--(0);

\node (4) at (4-5,0-6) {};
\node (5) at (5-5,1-6) {$\times$};
\node (6) at (5-5,-1-6) {};
\node (7) at (6-5,0-6) {$\times$};

\draw[dashed] (6)--(4)--(5);
\draw (5)--(7)--(6);

\node (8) at (8-5,0-6) {};
\node (9) at (9-5,1-6) {$\times$};
\node (10) at (9-5,-1-6) {};
\node (11) at (10-5,0-6) {$\times$};

\draw[dashed] (8)--(10);
\draw (10)--(11);
\draw[dashed] (11)--(9);
\draw (9)--(8);

\draw[dashed] (1)--(4);
\draw (4)--(2);

\draw (7)--(9); 
\draw[dashed] (7)--(10);

\draw[color=red,fill=black] (1-5,-1-6) circle [x radius=0.0053, y radius=0.0053, rotate=0]; 
\draw[fill=white] (0-5,0-6) circle [x radius=0.0736, y radius=0.0736, rotate=0]; 

\draw[fill=black] (1-5+4,-1-6) circle [x radius= 0.4145, y radius= 0.4145, rotate=0]; 
\draw[fill=white] (0-5+4,0-6) circle [x radius=0.077 , y radius=0.077 , rotate=0]; 

\draw[fill=white] (0-5+4+4,0-6) circle [x radius=0.2196, y radius=0.2196, rotate=0]; 

\draw[color=red,fill=black] (1-5+4+4,-1-6) circle [x radius=0.0064, y radius=0.0064, rotate=0]; 

\end{tikzpicture}

};

\draw [-To,line width=2pt] (1.5,-1.7-1.15+0.15) -> (1.5,-1.7-1.15-0.25);

\draw [-To,line width=2pt]  (10,-1.7-1.15-0.15) -> (10,-1.7-1.15+0.25);

\draw [-To,line width=2pt]  (10,-1.7-1.15+2.3-0.15) -> (10,-1.7-1.15+2.3+0.25);

\node (2) at (1+0.5+0.07,-4-0.05) {

\begin{tikzpicture}[scale=0.7]

\node (0) at (0-5,0-6) {};
\node (1) at (1-5,1-6) {$\times$};
\node (2) at (1-5,-1-6) {};
\node (3) at (2-5,0-6) {$\times$};


\draw[dashed] (0) -- (1);
\draw (1)--(3);
\draw[dashed] (3)--(2);
\draw (2)--(0);

\node (4) at (4-5,0-6) {};
\node (5) at (5-5,1-6) {$\times$};
\node (6) at (5-5,-1-6) {};
\node (7) at (6-5,0-6) {$\times$};


\draw[dashed] (6)--(4)--(5);
\draw (5)--(7)--(6);

\node (8) at (8-5,0-6) {};
\node (9) at (9-5,1-6) {$\times$};
\node (10) at (9-5,-1-6) {};
\node (11) at (10-5,0-6) {$\times$};


\draw[dashed] (8)--(10)--(11);
\draw (11)--(9)--(8);

\draw[dashed] (1)--(4);
\draw (4)--(2);

\draw (7)--(9); 
\draw (7)--(10);

\draw[color=red,fill=white] (0-5,0-6) circle [x radius=0.0648, y radius=0.0648, rotate=0]; 
\draw[color=red,fill=black] (1-5,-1-6) circle [x radius=0.0055, y radius=0.0055, rotate=0]; 

\draw[color=red,fill=white] (0-5+4,0-6) circle [x radius=0.0592, y radius=0.0592, rotate=0]; 
\draw[fill=black] (1-5+4,-1-6) circle [x radius=0.4109, y radius=0.4109, rotate=0]; 

\draw[fill=white] (0-5+4+4,0-6) circle [x radius=0.4325, y radius=0.4325, rotate=0]; 
\draw[color=red,fill=red] (1-5+4+4,-1-6) circle [x radius= 0.1094, y radius= 0.1094, rotate=0]; 

\end{tikzpicture}

};

\node (3) at (10+0.07,-4-0.05) {

\begin{tikzpicture}[scale=0.7]

\node (0) at (0-5,0-6) {};
\node (1) at (1-5,1-6) {$\times$};
\node (2) at (1-5,-1-6) {};
\node (3) at (2-5,0-6) {$\times$};


\draw[dashed] (0) -- (1);
\draw (1)--(3);
\draw[dashed] (3)--(2);
\draw (2)--(0);

\node (4) at (4-5,0-6) {};
\node (5) at (5-5,1-6) {$\times$};
\node (6) at (5-5,-1-6) {};
\node (7) at (6-5,0-6) {$\times$};


\draw[dashed] (6)--(4)--(5);
\draw (5)--(7)--(6);

\node (8) at (8-5,0-6) {};
\node (9) at (9-5,1-6) {$\times$};
\node (10) at (9-5,-1-6) {};
\node (11) at (10-5,0-6) {$\times$};


\draw[dashed] (8)--(10)--(11);
\draw (11)--(9)--(8);

\draw[dashed] (1)--(4);
\draw (4)--(2);

\draw (7)--(9); 
\draw[dashed] (7)--(10);

\draw[fill=white] (0-5,0-6) circle [x radius=0.0826, y radius=0.0826, rotate=0]; 

\draw[color=red,fill=red] (1-5,-1-6) circle [x radius=0.0052, y radius=0.0052, rotate=0]; 

\draw[fill=white] (0-5+4,0-6) circle [x radius=0.0883, y radius=0.0883, rotate=0]; 

\draw[fill=black] (1-5+4,-1-6) circle [x radius=0.4147, y radius=0.4147, rotate=0]; 

\draw[fill=white] (0-5+4+4,0-6) circle [x radius=0.4349, y radius=0.4349, rotate=0]; 

\draw[fill=black] (1-5+4+4,-1-6) circle [x radius=0.1683, y radius=0.1683, rotate=0]; 

\end{tikzpicture}
};

\node (4) at (10+0.03,-2+0.3+0.1) {\begin{tikzpicture}[scale=0.7]

\node (0) at (0-5,0-6) {};
\node (1) at (1-5,1-6) {$\times$};
\node (2) at (1-5,-1-6) {};
\node (3) at (2-5,0-6) {$\times$};


\draw[dashed] (0) -- (1);
\draw (1)--(3);
\draw (3)--(2);
\draw[dashed] (2)--(0);

\node (4) at (4-5,0-6) {};
\node (5) at (5-5,1-6) {$\times$};
\node (6) at (5-5,-1-6) {};
\node (7) at (6-5,0-6) {$\times$};


\draw[dashed] (6)--(4)--(5);
\draw (5)--(7)--(6);

\node (8) at (8-5,0-6) {};
\node (9) at (9-5,1-6) {$\times$};
\node (10) at (9-5,-1-6) {};
\node (11) at (10-5,0-6) {$\times$};


\draw[dashed] (8)--(10)--(11);
\draw (11)--(9)--(8);

\draw[dashed] (1)--(4);
\draw (4)--(2);

\draw (7)--(9); 
\draw[dashed] (7)--(10);

\draw[fill=white] (0-5,0-6) circle [x radius=0.2608, y radius=0.2608, rotate=0]; 

\draw[color=red,fill=red] (1-5,-1-6) circle [x radius=0.0239, y radius=0.0239, rotate=0]; 

\draw[color=red,fill=white] (0-5+4,0-6) circle [x radius=0.008, y radius=0.008, rotate=0]; 

\draw[fill=black] (1-5+4,-1-6) circle [x radius=0.4202, y radius=0.4202, rotate=0]; 

\draw[fill=white] (0-5+4+4,0-6) circle [x radius=0.4342, y radius=0.4342, rotate=0]; 

\draw[fill=black] (1-5+4+4,-1-6) circle [x radius=0.1681, y radius=0.1681, rotate=0]; 

\end{tikzpicture}

};

\node (5) at (10+0.02,0+0.6+0.1) {

\begin{tikzpicture}[scale=0.7]

\node (0) at (0-5,0-6) {};
\node (1) at (1-5,1-6) {$\times$};
\node (2) at (1-5,-1-6) {};
\node (3) at (2-5,0-6) {$\times$};

\draw[dashed] (0) -- (1);
\draw (1)--(3);
\draw (3)--(2);
\draw[dashed] (2)--(0);

\node (4) at (4-5,0-6) {};
\node (5) at (5-5,1-6) {$\times$};
\node (6) at (5-5,-1-6) {};
\node (7) at (6-5,0-6) {$\times$};


\draw[dashed] (6)--(4)--(5);
\draw (5)--(7)--(6);

\node (8) at (8-5,0-6) {};
\node (9) at (9-5,1-6) {$\times$};
\node (10) at (9-5,-1-6) {};
\node (11) at (10-5,0-6) {$\times$};


\draw[dashed] (8)--(10)--(11);
\draw (11)--(9)--(8);

\draw (1)--(4);
\draw (4)--(2);

\draw (7)--(9); 
\draw[dashed] (7)--(10);

\draw[fill=white] (0-5,0-6) circle [x radius= 0.2663, y radius= 0.2663, rotate=0]; 

\draw[fill=black] (1-5,-1-6) circle [x radius=0.3148, y radius=0.3148, rotate=0]; 

\draw[fill=white] (0-5+4,0-6) circle [x radius=0.1003, y radius=0.1003, rotate=0]; 

\draw[fill=black] (1-5+4,-1-6) circle [x radius=0.4122, y radius=0.4122, rotate=0]; 

\draw[fill=white] (0-5+4+4,0-6) circle [x radius=0.5, y radius=0.5, rotate=0]; 

\draw[fill=black] (1-5+4+4,-1-6) circle [x radius=0.1683, y radius=0.1683, rotate=0]; 

\end{tikzpicture}
};


\end{tikzpicture}
\caption{The local density of states, calculated as an integral of the experimentally measured impedance spectra over an energy window of $\varepsilon\in [-0.102,0.102]$ with a corresponding frequency range of 111 MHz to 126 MHz, apart from the two red sites in (e) where gapped low lying states were observed in this region and so the frequency range 113MHz to 117MHz was used instead. Otherwise the range was chosen to account for shifts in LDOS maxima, arising from variance in cable length and from T-connectors. 50$\Omega$ RG58 cables are indicated with a dashed line, and 93$\Omega$ RG62 cables with a solid line. Two sites were measured on each section, with an $\times$ depicting unmeasured sites. On measured sites the diameter of the circle is proportional to the calculated integral, rescaled to the maximum measured value out of the entire sequence (the white site on $g_2$ of the final structure). To distinguish a true zero energy state from broadened low lying states, red indicates if the LDOS at $\varepsilon=0$ had a local minima in the experimentally measured LDOS. The arrows denote the position of the structure to the corresponding classification in the sequence \eqref{Seq2}. On a topological phase boundary, a section that has support of a zero energy state on both a black site and a white site indicates two zero energy states originate from that section allowing the number of zero energy states, singular sections, and if the structure is on a higher steps phase boundary to be inferred. Reading off the figure, (b), (d), (f) are on a topological phase boundary, confirming the predicted sequential classification. The non-zero value calculated in (a) is a consequence of broadened peaks of low lying states in the measured spectra, with a local minima found at $\varepsilon=0$ in experiment.}
\label{LDOSexpsresults}
\end{figure}

\end{widetext}

\indent The delocalisation described above also allows us to use a maximum in transmittance as an experimental signature of when a structure is topologically marginal. We may do this by cutting a structure on a hopping term that connects two different sections (for example, the cut displayed in Fig. \ref{HigherStepPhaseBoundaryTransmission} (a)). If the section with the cut site is singular, an zero energy eigenstate persists which has the same voltage on both the input and output sites. For transmittance to be non-zero in a coaxial cable network, both the voltage and current output must be non-zero which corresponds to at least one neighbour of the input and output site having a non-zero voltage. Due to the triangular block structure of $Q$ this occurs on a phase boundary, where both of the sections connected by the cut hopping term provide two zero energy states. 
Due to chiral symmetry this gives a transfer matrix of
\begin{equation}
\begin{pmatrix}
I_{\text{out}} \\
V_{\text{out}}
\end{pmatrix} = M\begin{pmatrix}
I_{\text{in}} \\
V_{\text{in}}
\end{pmatrix} = \begin{pmatrix}
1 & 0 \\
0 & \alpha
\end{pmatrix} \begin{pmatrix}
I_{\text{in}} \\
V_{\text{in}}
\end{pmatrix}
\end{equation}
at zero energy. Assuming there are no losses $|M| = \alpha =1$ hence the current input also matches the current output, resulting in a local maximum in transmittance at $\varepsilon=0$. Away from a phase boundary, a maximum in transmittance at $\varepsilon=0$ is not expected. This is because not every neighbour of the input and output sites will have a non-zero voltage, and thus non-zero transmittance arises only due to broadening of nearby transmittance maximum. As chiral symmetry ensures a symmetric transmittance spectra around $\varepsilon=0$, a local minimum is expected in experiment. These predictions are confirmed in our structure, with experimental results displayed in Fig. \ref{HigherStepPhaseBoundaryTransmission}.

\begin{figure}
\begin{tikzpicture}
\node (0) at (0,0) {\includegraphics[width=0.38\paperwidth]{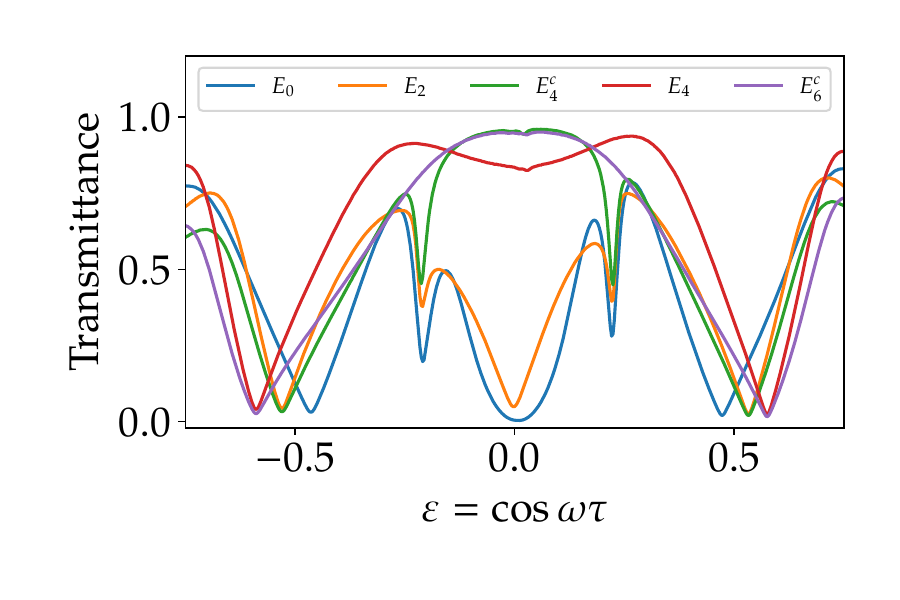}};
\node (A) at (0,-2.7+0.2) {(a)};

\node (1) at (0,-5+0.4) {
\begin{tikzpicture}[scale=0.6]

\node (0) at (0-5,0-6) {$\circ$};
\node (1) at (1-5,1-6) {$\bullet$};
\node (2) at (1-5,-1-6) {$\bullet$};
\node (3) at (2-5,0-6) {$\circ$};

\draw (2)--(0)--(1);
\draw (1)--(3)--(2);

\node (4) at (4-5,0-6) {$\circ$};
\node (5) at (5-5,1-6) {$\bullet$};
\node (6) at (5-5,-1-6) {$\bullet$};
\node (7) at (6-5,0-6) {$\circ$};


\draw (6)--(4)--(5);
\draw (5)--(7)--(6);

\node (8) at (8-5,0-6) {$\circ$};
\node (9) at (9-5,1-6) {$\bullet$};
\node[blue] (10) at (9-5,-1-6) {$\bullet$};
\node (11) at (10-5,0-6) {$\circ$};


\draw (8)--(10)--(11);
\draw (11)--(9)--(8);

\draw (1)--(4);
\draw (4)--(2);

\draw (7)--(9);

\node[blue] (12) at (9-5,-2.5-6) {$\bullet$};

\draw (7)--(12);
\draw[red,dotted] (7)--(10);

\draw[dashed] (9-5,-1.75-6) circle [x radius=0.25, y radius=1.1, rotate=0];
\node (InAndOut) at (9-5+1.3,-1.75-6+0.3) {Input/};
\node (InAndOut) at (9-5+1.3,-1.75-6-0.3) {Output};

\end{tikzpicture}
};

\node (B) at (0,-6.5+0.4) {(b)};

\end{tikzpicture}
\caption{(a) Displays the experimentally measured transmittance spectra for the same structures as in Fig. \ref{LDOSexpsresults}, but with the red hopping term in (b) cut to produce an extra site, which become the two blue sites and are used for an input and output to measure two-site transmittance. $E_n$ denotes the number of exactly zero energy states in the structure, with the superscript $c$ denoting if the structure is on a sequential topological phase boundary. As predicted, a local maxima at $\varepsilon=0$ is observed only for $E_4^c$ and $E_6^c$, both exceeding 94\%. The small dip at $\varepsilon=0$ is a consequence of non uniform cable lengths inducing small amounts of chiral symmetry breaking.}
\label{HigherStepPhaseBoundaryTransmission}
\end{figure}

Transmittance data agreed well with prediction, with transmittance at a local maxima for $\varepsilon=0$ only on a topological phase boundary, where it exceeded $94\%$. 
As expected transmittance was a minima at $\varepsilon=0$ when not on a phase boundary, further indicating our iteration through the sequential phase boundaries was indeed what we observed.

In combination, the observed LDOS and transmittance data agree well with our sequential classification of the structure, strongly validating our theory.

\section{Conclusion}\label{concl}

\noindent We have demonstrated that the topological classification of strongly disordered tight binding Hamiltonians is heavily dependant on how one controls the structure, and can be substantially altered by putting constraints on the hopping terms. By defining a convention in which to constrain the hopping terms of a Hamiltonian, we derived a protocol to algorithmically derive the necessary constraints as rational functions of the hopping terms. This protocol can be used as a method to precisely alter the localisation of zero modes and associated scattering response of a particular structure with a minimal number of controlled hopping terms. Using the consequences in localisation and transport, we experimentally confirmed our sequential classification of a twelve site structure, where hopping terms can be independently manipulated.

While the BBC dominates the classification once a structure is of a sufficient size, in thermodynamically small systems strong disorder can overcome the BBC. This is the regime in which sequential topology becomes of significance, and is the scale at which many experimental platforms---beyond coaxial cable networks---are currently realised  \cite{OptomechanicalCombDevice,polariton1,polariton2,acoustics1,acoustics2,Mechanical1,Topoelectic0,Topoelectric1,Topoelectric2}. We believe that the exact control to where zero modes have compact support that sequential topology realises will give much technological application to such systems.

We also note that there are applications of sequential topology to non-Hermitian topology. If we interpret the individual chiral blocks of our Hermitian Hamiltonian as an individual non-Hermitian Hamiltonian \cite{NH1,NH2}, then $n$ singular sections correspond to an $n$th order exceptional point in the parameter space $\xi$, and the phase transitions we study in this paper correspond to the diabolical subspaces. 

\section{Acknowledgements}

\noindent We are enormously grateful for help in the experimental work from Qingqing
Duan, and Ben Kinvig, and also to Belle Darling and
Phillip Graham for huge help in finding space to run the experiments.
Many thanks also to Patrick Fowler, Barry Pickup, Qingqing Duan, Ben
Kinvig, and Elena Callus for many illuminating and insightful
discussions while completing this work. MMM wishes to thank the EPSRC for a PhD studentship (Project Reference 2482664).

\bibliography{SequentialTopology}

\begin{thebibliography}{43}%
\makeatletter
\providecommand \@ifxundefined [1]{%
 \@ifx{#1\undefined}
}%
\providecommand \@ifnum [1]{%
 \ifnum #1\expandafter \@firstoftwo
 \else \expandafter \@secondoftwo
 \fi
}%
\providecommand \@ifx [1]{%
 \ifx #1\expandafter \@firstoftwo
 \else \expandafter \@secondoftwo
 \fi
}%
\providecommand \natexlab [1]{#1}%
\providecommand \enquote  [1]{``#1''}%
\providecommand \bibnamefont  [1]{#1}%
\providecommand \bibfnamefont [1]{#1}%
\providecommand \citenamefont [1]{#1}%
\providecommand \href@noop [0]{\@secondoftwo}%
\providecommand \href [0]{\begingroup \@sanitize@url \@href}%
\providecommand \@href[1]{\@@startlink{#1}\@@href}%
\providecommand \@@href[1]{\endgroup#1\@@endlink}%
\providecommand \@sanitize@url [0]{\catcode `\\12\catcode `\$12\catcode
  `\&12\catcode `\#12\catcode `\^12\catcode `\_12\catcode `\%12\relax}%
\providecommand \@@startlink[1]{}%
\providecommand \@@endlink[0]{}%
\providecommand \url  [0]{\begingroup\@sanitize@url \@url }%
\providecommand \@url [1]{\endgroup\@href {#1}{\urlprefix }}%
\providecommand \urlprefix  [0]{URL }%
\providecommand \Eprint [0]{\href }%
\providecommand \doibase [0]{https://doi.org/}%
\providecommand \selectlanguage [0]{\@gobble}%
\providecommand \bibinfo  [0]{\@secondoftwo}%
\providecommand \bibfield  [0]{\@secondoftwo}%
\providecommand \translation [1]{[#1]}%
\providecommand \BibitemOpen [0]{}%
\providecommand \bibitemStop [0]{}%
\providecommand \bibitemNoStop [0]{.\EOS\space}%
\providecommand \EOS [0]{\spacefactor3000\relax}%
\providecommand \BibitemShut  [1]{\csname bibitem#1\endcsname}%
\let\auto@bib@innerbib\@empty
\bibitem [{\citenamefont {Brouder}\ \emph {et~al.}(2007)\citenamefont
  {Brouder}, \citenamefont {Panati}, \citenamefont {Calandra}, \citenamefont
  {Mourougane},\ and\ \citenamefont {Marzari}}]{ExpLoc}%
  \BibitemOpen
  \bibfield  {author} {\bibinfo {author} {\bibfnamefont {C.}~\bibnamefont
  {Brouder}}, \bibinfo {author} {\bibfnamefont {G.}~\bibnamefont {Panati}},
  \bibinfo {author} {\bibfnamefont {M.}~\bibnamefont {Calandra}}, \bibinfo
  {author} {\bibfnamefont {C.}~\bibnamefont {Mourougane}},\ and\ \bibinfo
  {author} {\bibfnamefont {N.}~\bibnamefont {Marzari}},\ }\bibfield  {title}
  {\bibinfo {title} {Exponential localization of wannier functions in
  insulators},\ }\href {https://doi.org/10.1103/PhysRevLett.98.046402}
  {\bibfield  {journal} {\bibinfo  {journal} {Phys. Rev. Lett.}\ }\textbf
  {\bibinfo {volume} {98}},\ \bibinfo {pages} {046402} (\bibinfo {year}
  {2007})}\BibitemShut {NoStop}%
\bibitem [{\citenamefont {Ando}\ \emph {et~al.}(1975)\citenamefont {Ando},
  \citenamefont {Matsumoto},\ and\ \citenamefont {Uemura}}]{QH1}%
  \BibitemOpen
  \bibfield  {author} {\bibinfo {author} {\bibfnamefont {T.}~\bibnamefont
  {Ando}}, \bibinfo {author} {\bibfnamefont {Y.}~\bibnamefont {Matsumoto}},\
  and\ \bibinfo {author} {\bibfnamefont {Y.}~\bibnamefont {Uemura}},\
  }\bibfield  {title} {\bibinfo {title} {Theory of hall effect in a
  two-dimensional electron system},\ }\href@noop {} {\bibfield  {journal}
  {\bibinfo  {journal} {Journal of the Physical Society of Japan}\ }\textbf
  {\bibinfo {volume} {39}},\ \bibinfo {pages} {279} (\bibinfo {year}
  {1975})}\BibitemShut {NoStop}%
\bibitem [{\citenamefont {Wakabayashi}\ and\ \citenamefont
  {Kawaji}(1978)}]{QH2}%
  \BibitemOpen
  \bibfield  {author} {\bibinfo {author} {\bibfnamefont {J.}~\bibnamefont
  {Wakabayashi}}\ and\ \bibinfo {author} {\bibfnamefont {S.}~\bibnamefont
  {Kawaji}},\ }\bibfield  {title} {\bibinfo {title} {Hall effect in silicon mos
  inversion layers under strong magnetic fields},\ }\href
  {https://api.semanticscholar.org/CorpusID:204177545} {\bibfield  {journal}
  {\bibinfo  {journal} {Journal of the Physical Society of Japan}\ }\textbf
  {\bibinfo {volume} {44}},\ \bibinfo {pages} {1839} (\bibinfo {year}
  {1978})}\BibitemShut {NoStop}%
\bibitem [{\citenamefont {Thouless}\ \emph {et~al.}(1982)\citenamefont
  {Thouless}, \citenamefont {Kohmoto}, \citenamefont {Nightingale},\ and\
  \citenamefont {den Nijs}}]{QH3}%
  \BibitemOpen
  \bibfield  {author} {\bibinfo {author} {\bibfnamefont {D.~J.}\ \bibnamefont
  {Thouless}}, \bibinfo {author} {\bibfnamefont {M.}~\bibnamefont {Kohmoto}},
  \bibinfo {author} {\bibfnamefont {M.~P.}\ \bibnamefont {Nightingale}},\ and\
  \bibinfo {author} {\bibfnamefont {M.}~\bibnamefont {den Nijs}},\ }\bibfield
  {title} {\bibinfo {title} {Quantized hall conductance in a two-dimensional
  periodic potential},\ }\href@noop {} {\bibfield  {journal} {\bibinfo
  {journal} {Physical review letters}\ }\textbf {\bibinfo {volume} {49}},\
  \bibinfo {pages} {405} (\bibinfo {year} {1982})}\BibitemShut {NoStop}%
\bibitem [{\citenamefont {Cano}\ \emph {et~al.}(2018)\citenamefont {Cano},
  \citenamefont {Bradlyn}, \citenamefont {Wang}, \citenamefont {Elcoro},
  \citenamefont {Vergniory}, \citenamefont {Felser}, \citenamefont {Aroyo},\
  and\ \citenamefont {Bernevig}}]{TQC1}%
  \BibitemOpen
  \bibfield  {author} {\bibinfo {author} {\bibfnamefont {J.}~\bibnamefont
  {Cano}}, \bibinfo {author} {\bibfnamefont {B.}~\bibnamefont {Bradlyn}},
  \bibinfo {author} {\bibfnamefont {Z.}~\bibnamefont {Wang}}, \bibinfo {author}
  {\bibfnamefont {L.}~\bibnamefont {Elcoro}}, \bibinfo {author} {\bibfnamefont
  {M.~G.}\ \bibnamefont {Vergniory}}, \bibinfo {author} {\bibfnamefont
  {C.}~\bibnamefont {Felser}}, \bibinfo {author} {\bibfnamefont {M.~I.}\
  \bibnamefont {Aroyo}},\ and\ \bibinfo {author} {\bibfnamefont {B.~A.}\
  \bibnamefont {Bernevig}},\ }\bibfield  {title} {\bibinfo {title} {Building
  blocks of topological quantum chemistry: Elementary band representations},\
  }\href {https://doi.org/10.1103/PhysRevB.97.035139} {\bibfield  {journal}
  {\bibinfo  {journal} {Phys. Rev. B}\ }\textbf {\bibinfo {volume} {97}},\
  \bibinfo {pages} {035139} (\bibinfo {year} {2018})}\BibitemShut {NoStop}%
\bibitem [{\citenamefont {Bradlyn}\ \emph {et~al.}(2017)\citenamefont
  {Bradlyn}, \citenamefont {Elcoro}, \citenamefont {Cano}, \citenamefont
  {Vergniory}, \citenamefont {Wang}, \citenamefont {Felser}, \citenamefont
  {Aroyo},\ and\ \citenamefont {Bernevig}}]{TQC2}%
  \BibitemOpen
  \bibfield  {author} {\bibinfo {author} {\bibfnamefont {B.}~\bibnamefont
  {Bradlyn}}, \bibinfo {author} {\bibfnamefont {L.}~\bibnamefont {Elcoro}},
  \bibinfo {author} {\bibfnamefont {J.}~\bibnamefont {Cano}}, \bibinfo {author}
  {\bibfnamefont {M.~G.}\ \bibnamefont {Vergniory}}, \bibinfo {author}
  {\bibfnamefont {Z.}~\bibnamefont {Wang}}, \bibinfo {author} {\bibfnamefont
  {C.}~\bibnamefont {Felser}}, \bibinfo {author} {\bibfnamefont {M.~I.}\
  \bibnamefont {Aroyo}},\ and\ \bibinfo {author} {\bibfnamefont {B.~A.}\
  \bibnamefont {Bernevig}},\ }\bibfield  {title} {\bibinfo {title} {Topological
  quantum chemistry},\ }\href@noop {} {\bibfield  {journal} {\bibinfo
  {journal} {Nature}\ }\textbf {\bibinfo {volume} {547}},\ \bibinfo {pages}
  {298} (\bibinfo {year} {2017})}\BibitemShut {NoStop}%
\bibitem [{\citenamefont {Vergniory}\ \emph {et~al.}(2019)\citenamefont
  {Vergniory}, \citenamefont {Elcoro}, \citenamefont {Felser}, \citenamefont
  {Regnault}, \citenamefont {Bernevig},\ and\ \citenamefont {Wang}}]{TQC3}%
  \BibitemOpen
  \bibfield  {author} {\bibinfo {author} {\bibfnamefont {M.}~\bibnamefont
  {Vergniory}}, \bibinfo {author} {\bibfnamefont {L.}~\bibnamefont {Elcoro}},
  \bibinfo {author} {\bibfnamefont {C.}~\bibnamefont {Felser}}, \bibinfo
  {author} {\bibfnamefont {N.}~\bibnamefont {Regnault}}, \bibinfo {author}
  {\bibfnamefont {B.~A.}\ \bibnamefont {Bernevig}},\ and\ \bibinfo {author}
  {\bibfnamefont {Z.}~\bibnamefont {Wang}},\ }\bibfield  {title} {\bibinfo
  {title} {A complete catalogue of high-quality topological materials},\
  }\href@noop {} {\bibfield  {journal} {\bibinfo  {journal} {Nature}\ }\textbf
  {\bibinfo {volume} {566}},\ \bibinfo {pages} {480} (\bibinfo {year}
  {2019})}\BibitemShut {NoStop}%
\bibitem [{\citenamefont {Kitaev}(2009)}]{10foldway}%
  \BibitemOpen
  \bibfield  {author} {\bibinfo {author} {\bibfnamefont {A.}~\bibnamefont
  {Kitaev}},\ }\bibfield  {title} {\bibinfo {title} {Periodic table for
  topological insulators and superconductors},\ }\href
  {https://doi.org/10.1063/1.3149495} {\bibfield  {journal} {\bibinfo
  {journal} {AIP Conference Proceedings}\ }\textbf {\bibinfo {volume} {1134}},\
  \bibinfo {pages} {22} (\bibinfo {year} {2009})},\ \Eprint
  {https://arxiv.org/abs/https://pubs.aip.org/aip/acp/article-pdf/1134/1/22/11584243/22\_1\_online.pdf}
  {https://pubs.aip.org/aip/acp/article-pdf/1134/1/22/11584243/22\_1\_online.pdf}
  \BibitemShut {NoStop}%
\bibitem [{\citenamefont {Po}\ \emph {et~al.}(2018)\citenamefont {Po},
  \citenamefont {Watanabe},\ and\ \citenamefont
  {Vishwanath}}]{FragileTopology}%
  \BibitemOpen
  \bibfield  {author} {\bibinfo {author} {\bibfnamefont {H.~C.}\ \bibnamefont
  {Po}}, \bibinfo {author} {\bibfnamefont {H.}~\bibnamefont {Watanabe}},\ and\
  \bibinfo {author} {\bibfnamefont {A.}~\bibnamefont {Vishwanath}},\ }\bibfield
   {title} {\bibinfo {title} {Fragile topology and wannier obstructions},\
  }\href {https://doi.org/10.1103/PhysRevLett.121.126402} {\bibfield  {journal}
  {\bibinfo  {journal} {Phys. Rev. Lett.}\ }\textbf {\bibinfo {volume} {121}},\
  \bibinfo {pages} {126402} (\bibinfo {year} {2018})}\BibitemShut {NoStop}%
\bibitem [{\citenamefont {Prodan}\ and\ \citenamefont
  {Schulz-Baldes}(2016)}]{Ktheory}%
  \BibitemOpen
  \bibfield  {author} {\bibinfo {author} {\bibfnamefont {E.}~\bibnamefont
  {Prodan}}\ and\ \bibinfo {author} {\bibfnamefont {H.}~\bibnamefont
  {Schulz-Baldes}},\ }\bibfield  {title} {\bibinfo {title} {Bulk and boundary
  invariants for complex topological insulators},\ }\href@noop {} {\bibfield
  {journal} {\bibinfo  {journal} {K}\ } (\bibinfo {year} {2016})}\BibitemShut
  {NoStop}%
\bibitem [{\citenamefont {Prodan}(2011)}]{ExpConvergence}%
  \BibitemOpen
  \bibfield  {author} {\bibinfo {author} {\bibfnamefont {E.}~\bibnamefont
  {Prodan}},\ }\bibfield  {title} {\bibinfo {title} {Disordered topological
  insulators: a non-commutative geometry perspective},\ }\href
  {https://doi.org/10.1088/1751-8113/44/11/113001} {\bibfield  {journal}
  {\bibinfo  {journal} {Journal of Physics A: Mathematical and Theoretical}\
  }\textbf {\bibinfo {volume} {44}},\ \bibinfo {pages} {113001} (\bibinfo
  {year} {2011})}\BibitemShut {NoStop}%
\bibitem [{\citenamefont {Prodan}(2017)}]{NonComTorus}%
  \BibitemOpen
  \bibfield  {author} {\bibinfo {author} {\bibfnamefont {E.}~\bibnamefont
  {Prodan}},\ }\bibinfo {title} {Non-commutative brillouin torus},\ in\ \href
  {https://doi.org/10.1007/978-3-319-55023-7_3} {\emph {\bibinfo {booktitle} {A
  Computational Non-commutative Geometry Program for Disordered Topological
  Insulators}}}\ (\bibinfo  {publisher} {Springer International Publishing},\
  \bibinfo {address} {Cham},\ \bibinfo {year} {2017})\ pp.\ \bibinfo {pages}
  {25--48}\BibitemShut {NoStop}%
\bibitem [{\citenamefont {Oliveira}\ and\ \citenamefont
  {Chen}(2024)}]{RobustnessThermoLim}%
  \BibitemOpen
  \bibfield  {author} {\bibinfo {author} {\bibfnamefont {L.~A.}\ \bibnamefont
  {Oliveira}}\ and\ \bibinfo {author} {\bibfnamefont {W.}~\bibnamefont
  {Chen}},\ }\bibfield  {title} {\bibinfo {title} {Robustness of topological
  order against disorder},\ }\href
  {https://doi.org/10.1103/PhysRevB.109.094202} {\bibfield  {journal} {\bibinfo
   {journal} {Phys. Rev. B}\ }\textbf {\bibinfo {volume} {109}},\ \bibinfo
  {pages} {094202} (\bibinfo {year} {2024})}\BibitemShut {NoStop}%
\bibitem [{\citenamefont {Katsura}\ and\ \citenamefont
  {Koma}(2018)}]{NonComIndices}%
  \BibitemOpen
  \bibfield  {author} {\bibinfo {author} {\bibfnamefont {H.}~\bibnamefont
  {Katsura}}\ and\ \bibinfo {author} {\bibfnamefont {T.}~\bibnamefont {Koma}},\
  }\bibfield  {title} {\bibinfo {title} {The noncommutative index theorem and
  the periodic table for disordered topological insulators and
  superconductors},\ }\href {https://doi.org/10.1063/1.5026964} {\bibfield
  {journal} {\bibinfo  {journal} {Journal of Mathematical Physics}\ }\textbf
  {\bibinfo {volume} {59}},\ \bibinfo {pages} {031903} (\bibinfo {year}
  {2018})},\ \Eprint
  {https://arxiv.org/abs/https://pubs.aip.org/aip/jmp/article-pdf/doi/10.1063/1.5026964/14040818/031903\_1\_online.pdf}
  {https://pubs.aip.org/aip/jmp/article-pdf/doi/10.1063/1.5026964/14040818/031903\_1\_online.pdf}
  \BibitemShut {NoStop}%
\bibitem [{\citenamefont {Mathew}\ \emph {et~al.}(2020)\citenamefont {Mathew},
  \citenamefont {Pino},\ and\ \citenamefont
  {Verhagen}}]{OptomechanicalCombDevice}%
  \BibitemOpen
  \bibfield  {author} {\bibinfo {author} {\bibfnamefont {J.~P.}\ \bibnamefont
  {Mathew}}, \bibinfo {author} {\bibfnamefont {J.~d.}\ \bibnamefont {Pino}},\
  and\ \bibinfo {author} {\bibfnamefont {E.}~\bibnamefont {Verhagen}},\
  }\bibfield  {title} {\bibinfo {title} {Synthetic gauge fields for phonon
  transport in a nano-optomechanical system},\ }\href@noop {} {\bibfield
  {journal} {\bibinfo  {journal} {Nature nanotechnology}\ }\textbf {\bibinfo
  {volume} {15}},\ \bibinfo {pages} {198} (\bibinfo {year} {2020})}\BibitemShut
  {NoStop}%
\bibitem [{\citenamefont {Li}\ \emph {et~al.}(2022)\citenamefont {Li},
  \citenamefont {Wang}, \citenamefont {Xue}, \citenamefont {Wang},
  \citenamefont {Zhang}, \citenamefont {Liu}, \citenamefont {Zhu},
  \citenamefont {Watanabe}, \citenamefont {Taniguchi}, \citenamefont {Gao}
  \emph {et~al.}}]{polariton1}%
  \BibitemOpen
  \bibfield  {author} {\bibinfo {author} {\bibfnamefont {S.-y.}\ \bibnamefont
  {Li}}, \bibinfo {author} {\bibfnamefont {Z.}~\bibnamefont {Wang}}, \bibinfo
  {author} {\bibfnamefont {Y.}~\bibnamefont {Xue}}, \bibinfo {author}
  {\bibfnamefont {Y.}~\bibnamefont {Wang}}, \bibinfo {author} {\bibfnamefont
  {S.}~\bibnamefont {Zhang}}, \bibinfo {author} {\bibfnamefont
  {J.}~\bibnamefont {Liu}}, \bibinfo {author} {\bibfnamefont {Z.}~\bibnamefont
  {Zhu}}, \bibinfo {author} {\bibfnamefont {K.}~\bibnamefont {Watanabe}},
  \bibinfo {author} {\bibfnamefont {T.}~\bibnamefont {Taniguchi}}, \bibinfo
  {author} {\bibfnamefont {H.-j.}\ \bibnamefont {Gao}}, \emph {et~al.},\
  }\bibfield  {title} {\bibinfo {title} {Imaging topological and correlated
  insulating states in twisted monolayer-bilayer graphene},\ }\href@noop {}
  {\bibfield  {journal} {\bibinfo  {journal} {Nature communications}\ }\textbf
  {\bibinfo {volume} {13}},\ \bibinfo {pages} {4225} (\bibinfo {year}
  {2022})}\BibitemShut {NoStop}%
\bibitem [{\citenamefont {Pieczarka}\ \emph {et~al.}(2021)\citenamefont
  {Pieczarka}, \citenamefont {Estrecho}, \citenamefont {Ghosh}, \citenamefont
  {Wurdack}, \citenamefont {Steger}, \citenamefont {Snoke}, \citenamefont
  {West}, \citenamefont {Pfeiffer}, \citenamefont {Liew}, \citenamefont
  {Truscott},\ and\ \citenamefont {Ostrovskaya}}]{polariton2}%
  \BibitemOpen
  \bibfield  {author} {\bibinfo {author} {\bibfnamefont {M.}~\bibnamefont
  {Pieczarka}}, \bibinfo {author} {\bibfnamefont {E.}~\bibnamefont {Estrecho}},
  \bibinfo {author} {\bibfnamefont {S.}~\bibnamefont {Ghosh}}, \bibinfo
  {author} {\bibfnamefont {M.}~\bibnamefont {Wurdack}}, \bibinfo {author}
  {\bibfnamefont {M.}~\bibnamefont {Steger}}, \bibinfo {author} {\bibfnamefont
  {D.~W.}\ \bibnamefont {Snoke}}, \bibinfo {author} {\bibfnamefont
  {K.}~\bibnamefont {West}}, \bibinfo {author} {\bibfnamefont {L.~N.}\
  \bibnamefont {Pfeiffer}}, \bibinfo {author} {\bibfnamefont {T.~C.~H.}\
  \bibnamefont {Liew}}, \bibinfo {author} {\bibfnamefont {A.~G.}\ \bibnamefont
  {Truscott}},\ and\ \bibinfo {author} {\bibfnamefont {E.~A.}\ \bibnamefont
  {Ostrovskaya}},\ }\bibfield  {title} {\bibinfo {title} {Topological phase
  transition in an all-optical exciton-polariton lattice},\ }\href
  {https://doi.org/10.1364/OPTICA.426996} {\bibfield  {journal} {\bibinfo
  {journal} {Optica}\ }\textbf {\bibinfo {volume} {8}},\ \bibinfo {pages}
  {1084} (\bibinfo {year} {2021})}\BibitemShut {NoStop}%
\bibitem [{\citenamefont {Coutant}\ \emph {et~al.}(2021)\citenamefont
  {Coutant}, \citenamefont {Sivadon}, \citenamefont {Zheng}, \citenamefont
  {Achilleos}, \citenamefont {Richoux}, \citenamefont {Theocharis},\ and\
  \citenamefont {Pagneux}}]{acoustics1}%
  \BibitemOpen
  \bibfield  {author} {\bibinfo {author} {\bibfnamefont {A.}~\bibnamefont
  {Coutant}}, \bibinfo {author} {\bibfnamefont {A.}~\bibnamefont {Sivadon}},
  \bibinfo {author} {\bibfnamefont {L.}~\bibnamefont {Zheng}}, \bibinfo
  {author} {\bibfnamefont {V.}~\bibnamefont {Achilleos}}, \bibinfo {author}
  {\bibfnamefont {O.}~\bibnamefont {Richoux}}, \bibinfo {author} {\bibfnamefont
  {G.}~\bibnamefont {Theocharis}},\ and\ \bibinfo {author} {\bibfnamefont
  {V.}~\bibnamefont {Pagneux}},\ }\bibfield  {title} {\bibinfo {title}
  {Acoustic su-schrieffer-heeger lattice: Direct mapping of acoustic waveguides
  to the su-schrieffer-heeger model},\ }\href
  {https://doi.org/10.1103/PhysRevB.103.224309} {\bibfield  {journal} {\bibinfo
   {journal} {Phys. Rev. B}\ }\textbf {\bibinfo {volume} {103}},\ \bibinfo
  {pages} {224309} (\bibinfo {year} {2021})}\BibitemShut {NoStop}%
\bibitem [{\citenamefont {Chen}\ and\ \citenamefont
  {Prodan}(2024)}]{acoustics2}%
  \BibitemOpen
  \bibfield  {author} {\bibinfo {author} {\bibfnamefont {S.-Y.}\ \bibnamefont
  {Chen}}\ and\ \bibinfo {author} {\bibfnamefont {C.}~\bibnamefont {Prodan}},\
  }\bibfield  {title} {\bibinfo {title} {Chiral symmetry-preserving coupling
  method for topological acoustic metamaterials},\ }\href
  {https://doi.org/10.1103/PhysRevMaterials.8.015204} {\bibfield  {journal}
  {\bibinfo  {journal} {Phys. Rev. Mater.}\ }\textbf {\bibinfo {volume} {8}},\
  \bibinfo {pages} {015204} (\bibinfo {year} {2024})}\BibitemShut {NoStop}%
\bibitem [{\citenamefont {Chen}\ \emph {et~al.}(2021)\citenamefont {Chen},
  \citenamefont {Chaunsali}, \citenamefont {Christensen}, \citenamefont
  {Theocharis},\ and\ \citenamefont {Yang}}]{Mechanical1}%
  \BibitemOpen
  \bibfield  {author} {\bibinfo {author} {\bibfnamefont {C.-W.}\ \bibnamefont
  {Chen}}, \bibinfo {author} {\bibfnamefont {R.}~\bibnamefont {Chaunsali}},
  \bibinfo {author} {\bibfnamefont {J.}~\bibnamefont {Christensen}}, \bibinfo
  {author} {\bibfnamefont {G.}~\bibnamefont {Theocharis}},\ and\ \bibinfo
  {author} {\bibfnamefont {J.}~\bibnamefont {Yang}},\ }\bibfield  {title}
  {\bibinfo {title} {Corner states in a second-order mechanical topological
  insulator},\ }\href@noop {} {\bibfield  {journal} {\bibinfo  {journal}
  {Communications Materials}\ }\textbf {\bibinfo {volume} {2}},\ \bibinfo
  {pages} {62} (\bibinfo {year} {2021})}\BibitemShut {NoStop}%
\bibitem [{\citenamefont {Lee}\ \emph {et~al.}(2018)\citenamefont {Lee},
  \citenamefont {Imhof}, \citenamefont {Berger}, \citenamefont {Bayer},
  \citenamefont {Brehm}, \citenamefont {Molenkamp}, \citenamefont {Kiessling},\
  and\ \citenamefont {Thomale}}]{Topoelectic0}%
  \BibitemOpen
  \bibfield  {author} {\bibinfo {author} {\bibfnamefont {C.~H.}\ \bibnamefont
  {Lee}}, \bibinfo {author} {\bibfnamefont {S.}~\bibnamefont {Imhof}}, \bibinfo
  {author} {\bibfnamefont {C.}~\bibnamefont {Berger}}, \bibinfo {author}
  {\bibfnamefont {F.}~\bibnamefont {Bayer}}, \bibinfo {author} {\bibfnamefont
  {J.}~\bibnamefont {Brehm}}, \bibinfo {author} {\bibfnamefont {L.~W.}\
  \bibnamefont {Molenkamp}}, \bibinfo {author} {\bibfnamefont {T.}~\bibnamefont
  {Kiessling}},\ and\ \bibinfo {author} {\bibfnamefont {R.}~\bibnamefont
  {Thomale}},\ }\bibfield  {title} {\bibinfo {title} {Topolectrical circuits},\
  }\href@noop {} {\bibfield  {journal} {\bibinfo  {journal} {Communications
  Physics}\ }\textbf {\bibinfo {volume} {1}},\ \bibinfo {pages} {39} (\bibinfo
  {year} {2018})}\BibitemShut {NoStop}%
\bibitem [{\citenamefont {Ningyuan}\ \emph {et~al.}(2015)\citenamefont
  {Ningyuan}, \citenamefont {Owens}, \citenamefont {Sommer}, \citenamefont
  {Schuster},\ and\ \citenamefont {Simon}}]{Topoelectric1}%
  \BibitemOpen
  \bibfield  {author} {\bibinfo {author} {\bibfnamefont {J.}~\bibnamefont
  {Ningyuan}}, \bibinfo {author} {\bibfnamefont {C.}~\bibnamefont {Owens}},
  \bibinfo {author} {\bibfnamefont {A.}~\bibnamefont {Sommer}}, \bibinfo
  {author} {\bibfnamefont {D.}~\bibnamefont {Schuster}},\ and\ \bibinfo
  {author} {\bibfnamefont {J.}~\bibnamefont {Simon}},\ }\bibfield  {title}
  {\bibinfo {title} {Time- and site-resolved dynamics in a topological
  circuit},\ }\href {https://doi.org/10.1103/PhysRevX.5.021031} {\bibfield
  {journal} {\bibinfo  {journal} {Phys. Rev. X}\ }\textbf {\bibinfo {volume}
  {5}},\ \bibinfo {pages} {021031} (\bibinfo {year} {2015})}\BibitemShut
  {NoStop}%
\bibitem [{\citenamefont {Wang}\ \emph {et~al.}(2020)\citenamefont {Wang},
  \citenamefont {Price}, \citenamefont {Zhang},\ and\ \citenamefont
  {Chong}}]{Topoelectric2}%
  \BibitemOpen
  \bibfield  {author} {\bibinfo {author} {\bibfnamefont {Y.}~\bibnamefont
  {Wang}}, \bibinfo {author} {\bibfnamefont {H.~M.}\ \bibnamefont {Price}},
  \bibinfo {author} {\bibfnamefont {B.}~\bibnamefont {Zhang}},\ and\ \bibinfo
  {author} {\bibfnamefont {Y.}~\bibnamefont {Chong}},\ }\bibfield  {title}
  {\bibinfo {title} {Circuit implementation of a four-dimensional topological
  insulator},\ }\href@noop {} {\bibfield  {journal} {\bibinfo  {journal}
  {Nature communications}\ }\textbf {\bibinfo {volume} {11}},\ \bibinfo {pages}
  {2356} (\bibinfo {year} {2020})}\BibitemShut {NoStop}%
\bibitem [{\citenamefont {Oliver}\ \emph {et~al.}(2023)\citenamefont {Oliver},
  \citenamefont {Nabari}, \citenamefont {Price}, \citenamefont {Ricci},\ and\
  \citenamefont {Carusotto}}]{cables1}%
  \BibitemOpen
  \bibfield  {author} {\bibinfo {author} {\bibfnamefont {C.}~\bibnamefont
  {Oliver}}, \bibinfo {author} {\bibfnamefont {D.}~\bibnamefont {Nabari}},
  \bibinfo {author} {\bibfnamefont {H.~M.}\ \bibnamefont {Price}}, \bibinfo
  {author} {\bibfnamefont {L.}~\bibnamefont {Ricci}},\ and\ \bibinfo {author}
  {\bibfnamefont {I.}~\bibnamefont {Carusotto}},\ }\href
  {https://arxiv.org/abs/2310.18325} {\bibinfo {title} {Photonic lattices of
  coaxial cables: flat bands and artificial magnetic fields}} (\bibinfo {year}
  {2023}),\ \Eprint {https://arxiv.org/abs/2310.18325} {arXiv:2310.18325
  [physics.optics]} \BibitemShut {NoStop}%
\bibitem [{\citenamefont {Jiang}\ \emph {et~al.}(2019)\citenamefont {Jiang},
  \citenamefont {Xiao}, \citenamefont {Chen}, \citenamefont {Yang},
  \citenamefont {Fang}, \citenamefont {Tam},\ and\ \citenamefont
  {Chan}}]{cables2}%
  \BibitemOpen
  \bibfield  {author} {\bibinfo {author} {\bibfnamefont {T.}~\bibnamefont
  {Jiang}}, \bibinfo {author} {\bibfnamefont {M.}~\bibnamefont {Xiao}},
  \bibinfo {author} {\bibfnamefont {W.-J.}\ \bibnamefont {Chen}}, \bibinfo
  {author} {\bibfnamefont {L.}~\bibnamefont {Yang}}, \bibinfo {author}
  {\bibfnamefont {Y.}~\bibnamefont {Fang}}, \bibinfo {author} {\bibfnamefont
  {W.~Y.}\ \bibnamefont {Tam}},\ and\ \bibinfo {author} {\bibfnamefont {C.~T.}\
  \bibnamefont {Chan}},\ }\bibfield  {title} {\bibinfo {title} {Experimental
  demonstration of angular momentum-dependent topological transport using a
  transmission line network},\ }\href@noop {} {\bibfield  {journal} {\bibinfo
  {journal} {Nature communications}\ }\textbf {\bibinfo {volume} {10}},\
  \bibinfo {pages} {434} (\bibinfo {year} {2019})}\BibitemShut {NoStop}%
\bibitem [{\citenamefont {Whittaker}\ and\ \citenamefont
  {Ellis}(2021)}]{cables3}%
  \BibitemOpen
  \bibfield  {author} {\bibinfo {author} {\bibfnamefont {D.~M.}\ \bibnamefont
  {Whittaker}}\ and\ \bibinfo {author} {\bibfnamefont {R.}~\bibnamefont
  {Ellis}},\ }\href {https://arxiv.org/abs/2102.03641} {\bibinfo {title}
  {Topological protection in disordered photonic multilayers and transmission
  lines}} (\bibinfo {year} {2021}),\ \Eprint {https://arxiv.org/abs/2102.03641}
  {arXiv:2102.03641 [physics.optics]} \BibitemShut {NoStop}%
\bibitem [{\citenamefont {McCarthy}\ and\ \citenamefont
  {Whittaker}(2024)}]{classification1}%
  \BibitemOpen
  \bibfield  {author} {\bibinfo {author} {\bibfnamefont {M.~M.}\ \bibnamefont
  {McCarthy}}\ and\ \bibinfo {author} {\bibfnamefont {D.~M.}\ \bibnamefont
  {Whittaker}},\ }\href {https://arxiv.org/abs/2405.16274} {\bibinfo {title} {A
  topological classification of finite chiral structures using complete
  matchings}} (\bibinfo {year} {2024}),\ \Eprint
  {https://arxiv.org/abs/2405.16274} {arXiv:2405.16274 [cond-mat.mes-hall]}
  \BibitemShut {NoStop}%
\bibitem [{\citenamefont {Cook}\ and\ \citenamefont
  {Nielsen}(2023)}]{FiniteSiteTop}%
  \BibitemOpen
  \bibfield  {author} {\bibinfo {author} {\bibfnamefont {A.~M.}\ \bibnamefont
  {Cook}}\ and\ \bibinfo {author} {\bibfnamefont {A.~E.~B.}\ \bibnamefont
  {Nielsen}},\ }\bibfield  {title} {\bibinfo {title} {Finite-size topology},\
  }\href {https://doi.org/10.1103/PhysRevB.108.045144} {\bibfield  {journal}
  {\bibinfo  {journal} {Phys. Rev. B}\ }\textbf {\bibinfo {volume} {108}},\
  \bibinfo {pages} {045144} (\bibinfo {year} {2023})}\BibitemShut {NoStop}%
\bibitem [{\citenamefont {Bianco}\ and\ \citenamefont
  {Resta}(2011)}]{Realspace1}%
  \BibitemOpen
  \bibfield  {author} {\bibinfo {author} {\bibfnamefont {R.}~\bibnamefont
  {Bianco}}\ and\ \bibinfo {author} {\bibfnamefont {R.}~\bibnamefont {Resta}},\
  }\bibfield  {title} {\bibinfo {title} {Mapping topological order in
  coordinate space},\ }\href {https://doi.org/10.1103/PhysRevB.84.241106}
  {\bibfield  {journal} {\bibinfo  {journal} {Phys. Rev. B}\ }\textbf {\bibinfo
  {volume} {84}},\ \bibinfo {pages} {241106} (\bibinfo {year}
  {2011})}\BibitemShut {NoStop}%
\bibitem [{\citenamefont {Loring}(2015)}]{Pseudospectra}%
  \BibitemOpen
  \bibfield  {author} {\bibinfo {author} {\bibfnamefont {T.~A.}\ \bibnamefont
  {Loring}},\ }\bibfield  {title} {\bibinfo {title} {K-theory and pseudospectra
  for topological insulators},\ }\href
  {https://doi.org/https://doi.org/10.1016/j.aop.2015.02.031} {\bibfield
  {journal} {\bibinfo  {journal} {Annals of Physics}\ }\textbf {\bibinfo
  {volume} {356}},\ \bibinfo {pages} {383} (\bibinfo {year}
  {2015})}\BibitemShut {NoStop}%
\bibitem [{\citenamefont {Cerjan}\ and\ \citenamefont
  {Loring}(2022)}]{Realspace3}%
  \BibitemOpen
  \bibfield  {author} {\bibinfo {author} {\bibfnamefont {A.}~\bibnamefont
  {Cerjan}}\ and\ \bibinfo {author} {\bibfnamefont {T.~A.}\ \bibnamefont
  {Loring}},\ }\bibfield  {title} {\bibinfo {title} {Local invariants identify
  topology in metals and gapless systems},\ }\href
  {https://doi.org/10.1103/PhysRevB.106.064109} {\bibfield  {journal} {\bibinfo
   {journal} {Phys. Rev. B}\ }\textbf {\bibinfo {volume} {106}},\ \bibinfo
  {pages} {064109} (\bibinfo {year} {2022})}\BibitemShut {NoStop}%
\bibitem [{\citenamefont {Gebert}\ \emph {et~al.}(2020)\citenamefont {Gebert},
  \citenamefont {Irsigler},\ and\ \citenamefont {Hofstetter}}]{Realspace2}%
  \BibitemOpen
  \bibfield  {author} {\bibinfo {author} {\bibfnamefont {U.}~\bibnamefont
  {Gebert}}, \bibinfo {author} {\bibfnamefont {B.}~\bibnamefont {Irsigler}},\
  and\ \bibinfo {author} {\bibfnamefont {W.}~\bibnamefont {Hofstetter}},\
  }\bibfield  {title} {\bibinfo {title} {Local chern marker of smoothly
  confined hofstadter fermions},\ }\href
  {https://doi.org/10.1103/PhysRevA.101.063606} {\bibfield  {journal} {\bibinfo
   {journal} {Phys. Rev. A}\ }\textbf {\bibinfo {volume} {101}},\ \bibinfo
  {pages} {063606} (\bibinfo {year} {2020})}\BibitemShut {NoStop}%
\bibitem [{\citenamefont {Whittaker}\ \emph {et~al.}(2023)\citenamefont
  {Whittaker}, \citenamefont {McCarthy},\ and\ \citenamefont
  {Duan}}]{LinearPaper}%
  \BibitemOpen
  \bibfield  {author} {\bibinfo {author} {\bibfnamefont {D.~M.}\ \bibnamefont
  {Whittaker}}, \bibinfo {author} {\bibfnamefont {M.~M.}\ \bibnamefont
  {McCarthy}},\ and\ \bibinfo {author} {\bibfnamefont {Q.}~\bibnamefont
  {Duan}},\ }\href@noop {} {\bibinfo {title} {{Observation of a Topological
  Phase Transition in Random Coaxial Cable Structures with Chiral Symmetry}}}
  (\bibinfo {year} {2023}),\ \Eprint {https://arxiv.org/abs/2311.11040}
  {arXiv:2311.11040 [cond-mat.dis-nn]} \BibitemShut {NoStop}%
\bibitem [{\citenamefont {Huang}\ \emph {et~al.}(2024)\citenamefont {Huang},
  \citenamefont {Ke}, \citenamefont {Guan}, \citenamefont {Li},\ and\
  \citenamefont {Lou}}]{Strain}%
  \BibitemOpen
  \bibfield  {author} {\bibinfo {author} {\bibfnamefont {A.}~\bibnamefont
  {Huang}}, \bibinfo {author} {\bibfnamefont {S.}~\bibnamefont {Ke}}, \bibinfo
  {author} {\bibfnamefont {J.-H.}\ \bibnamefont {Guan}}, \bibinfo {author}
  {\bibfnamefont {J.}~\bibnamefont {Li}},\ and\ \bibinfo {author}
  {\bibfnamefont {W.-K.}\ \bibnamefont {Lou}},\ }\bibfield  {title} {\bibinfo
  {title} {Strain-induced topological phase transition in graphene
  nanoribbons},\ }\href {https://doi.org/10.1103/PhysRevB.109.045408}
  {\bibfield  {journal} {\bibinfo  {journal} {Phys. Rev. B}\ }\textbf {\bibinfo
  {volume} {109}},\ \bibinfo {pages} {045408} (\bibinfo {year}
  {2024})}\BibitemShut {NoStop}%
\bibitem [{\citenamefont {Song}\ \emph {et~al.}(2017)\citenamefont {Song},
  \citenamefont {Fang},\ and\ \citenamefont {Fang}}]{HOTI1}%
  \BibitemOpen
  \bibfield  {author} {\bibinfo {author} {\bibfnamefont {Z.}~\bibnamefont
  {Song}}, \bibinfo {author} {\bibfnamefont {Z.}~\bibnamefont {Fang}},\ and\
  \bibinfo {author} {\bibfnamefont {C.}~\bibnamefont {Fang}},\ }\bibfield
  {title} {\bibinfo {title} {$(d\ensuremath{-}2)$-dimensional edge states of
  rotation symmetry protected topological states},\ }\href
  {https://doi.org/10.1103/PhysRevLett.119.246402} {\bibfield  {journal}
  {\bibinfo  {journal} {Phys. Rev. Lett.}\ }\textbf {\bibinfo {volume} {119}},\
  \bibinfo {pages} {246402} (\bibinfo {year} {2017})}\BibitemShut {NoStop}%
\bibitem [{\citenamefont {Benalcazar}\ and\ \citenamefont
  {Cerjan}(2022)}]{HOTI2}%
  \BibitemOpen
  \bibfield  {author} {\bibinfo {author} {\bibfnamefont {W.~A.}\ \bibnamefont
  {Benalcazar}}\ and\ \bibinfo {author} {\bibfnamefont {A.}~\bibnamefont
  {Cerjan}},\ }\bibfield  {title} {\bibinfo {title} {Chiral-symmetric
  higher-order topological phases of matter},\ }\href
  {https://doi.org/10.1103/PhysRevLett.128.127601} {\bibfield  {journal}
  {\bibinfo  {journal} {Phys. Rev. Lett.}\ }\textbf {\bibinfo {volume} {128}},\
  \bibinfo {pages} {127601} (\bibinfo {year} {2022})}\BibitemShut {NoStop}%
\bibitem [{\citenamefont {Schindler}\ \emph {et~al.}(2018)\citenamefont
  {Schindler}, \citenamefont {Cook}, \citenamefont {Vergniory}, \citenamefont
  {Wang}, \citenamefont {Parkin}, \citenamefont {Bernevig},\ and\ \citenamefont
  {Neupert}}]{HOTI3}%
  \BibitemOpen
  \bibfield  {author} {\bibinfo {author} {\bibfnamefont {F.}~\bibnamefont
  {Schindler}}, \bibinfo {author} {\bibfnamefont {A.~M.}\ \bibnamefont {Cook}},
  \bibinfo {author} {\bibfnamefont {M.~G.}\ \bibnamefont {Vergniory}}, \bibinfo
  {author} {\bibfnamefont {Z.}~\bibnamefont {Wang}}, \bibinfo {author}
  {\bibfnamefont {S.~S.~P.}\ \bibnamefont {Parkin}}, \bibinfo {author}
  {\bibfnamefont {B.~A.}\ \bibnamefont {Bernevig}},\ and\ \bibinfo {author}
  {\bibfnamefont {T.}~\bibnamefont {Neupert}},\ }\bibfield  {title} {\bibinfo
  {title} {Higher-order topological insulators},\ }\href
  {https://doi.org/10.1126/sciadv.aat0346} {\bibfield  {journal} {\bibinfo
  {journal} {Science Advances}\ }\textbf {\bibinfo {volume} {4}},\ \bibinfo
  {pages} {eaat0346} (\bibinfo {year} {2018})},\ \Eprint
  {https://arxiv.org/abs/https://www.science.org/doi/pdf/10.1126/sciadv.aat0346}
  {https://www.science.org/doi/pdf/10.1126/sciadv.aat0346} \BibitemShut
  {NoStop}%
\bibitem [{\citenamefont {Khalaf}(2018)}]{HOTI4}%
  \BibitemOpen
  \bibfield  {author} {\bibinfo {author} {\bibfnamefont {E.}~\bibnamefont
  {Khalaf}},\ }\bibfield  {title} {\bibinfo {title} {Higher-order topological
  insulators and superconductors protected by inversion symmetry},\ }\href
  {https://doi.org/10.1103/PhysRevB.97.205136} {\bibfield  {journal} {\bibinfo
  {journal} {Phys. Rev. B}\ }\textbf {\bibinfo {volume} {97}},\ \bibinfo
  {pages} {205136} (\bibinfo {year} {2018})}\BibitemShut {NoStop}%
\bibitem [{\citenamefont {Trifunovic}\ and\ \citenamefont
  {Brouwer}(2019)}]{HOTI5}%
  \BibitemOpen
  \bibfield  {author} {\bibinfo {author} {\bibfnamefont {L.}~\bibnamefont
  {Trifunovic}}\ and\ \bibinfo {author} {\bibfnamefont {P.~W.}\ \bibnamefont
  {Brouwer}},\ }\bibfield  {title} {\bibinfo {title} {Higher-order
  bulk-boundary correspondence for topological crystalline phases},\ }\href
  {https://doi.org/10.1103/PhysRevX.9.011012} {\bibfield  {journal} {\bibinfo
  {journal} {Phys. Rev. X}\ }\textbf {\bibinfo {volume} {9}},\ \bibinfo {pages}
  {011012} (\bibinfo {year} {2019})}\BibitemShut {NoStop}%
\bibitem [{\citenamefont {Li}\ \emph {et~al.}(2023)\citenamefont {Li},
  \citenamefont {Qiu}, \citenamefont {Zhang},\ and\ \citenamefont
  {Qiu}}]{HOTI6}%
  \BibitemOpen
  \bibfield  {author} {\bibinfo {author} {\bibfnamefont {Y.}~\bibnamefont
  {Li}}, \bibinfo {author} {\bibfnamefont {H.}~\bibnamefont {Qiu}}, \bibinfo
  {author} {\bibfnamefont {Q.}~\bibnamefont {Zhang}},\ and\ \bibinfo {author}
  {\bibfnamefont {C.}~\bibnamefont {Qiu}},\ }\bibfield  {title} {\bibinfo
  {title} {Acoustic higher-order topological insulators protected by multipole
  chiral numbers},\ }\href {https://doi.org/10.1103/PhysRevB.108.205135}
  {\bibfield  {journal} {\bibinfo  {journal} {Phys. Rev. B}\ }\textbf {\bibinfo
  {volume} {108}},\ \bibinfo {pages} {205135} (\bibinfo {year}
  {2023})}\BibitemShut {NoStop}%
\bibitem [{\citenamefont {Jia}\ \emph {et~al.}(2023)\citenamefont {Jia},
  \citenamefont {Zhou}, \citenamefont {Zhang}, \citenamefont {Zhang},\ and\
  \citenamefont {Liu}}]{HOTI_LowerDimPropertisOfMomentumSpace}%
  \BibitemOpen
  \bibfield  {author} {\bibinfo {author} {\bibfnamefont {W.}~\bibnamefont
  {Jia}}, \bibinfo {author} {\bibfnamefont {X.-C.}\ \bibnamefont {Zhou}},
  \bibinfo {author} {\bibfnamefont {L.}~\bibnamefont {Zhang}}, \bibinfo
  {author} {\bibfnamefont {L.}~\bibnamefont {Zhang}},\ and\ \bibinfo {author}
  {\bibfnamefont {X.-J.}\ \bibnamefont {Liu}},\ }\bibfield  {title} {\bibinfo
  {title} {Unified characterization for higher-order topological phase
  transitions},\ }\href {https://doi.org/10.1103/PhysRevResearch.5.L022032}
  {\bibfield  {journal} {\bibinfo  {journal} {Phys. Rev. Res.}\ }\textbf
  {\bibinfo {volume} {5}},\ \bibinfo {pages} {L022032} (\bibinfo {year}
  {2023})}\BibitemShut {NoStop}%
\bibitem [{\citenamefont {Ding}\ \emph {et~al.}(2022)\citenamefont {Ding},
  \citenamefont {Fang},\ and\ \citenamefont {Ma}}]{NH1}%
  \BibitemOpen
  \bibfield  {author} {\bibinfo {author} {\bibfnamefont {K.}~\bibnamefont
  {Ding}}, \bibinfo {author} {\bibfnamefont {C.}~\bibnamefont {Fang}},\ and\
  \bibinfo {author} {\bibfnamefont {G.}~\bibnamefont {Ma}},\ }\bibfield
  {title} {\bibinfo {title} {Non-hermitian topology and exceptional-point
  geometries},\ }\href@noop {} {\bibfield  {journal} {\bibinfo  {journal}
  {Nature Reviews Physics}\ }\textbf {\bibinfo {volume} {4}},\ \bibinfo {pages}
  {745} (\bibinfo {year} {2022})}\BibitemShut {NoStop}%
\bibitem [{\citenamefont {Borgnia}\ \emph {et~al.}(2020)\citenamefont
  {Borgnia}, \citenamefont {Kruchkov},\ and\ \citenamefont {Slager}}]{NH2}%
  \BibitemOpen
  \bibfield  {author} {\bibinfo {author} {\bibfnamefont {D.~S.}\ \bibnamefont
  {Borgnia}}, \bibinfo {author} {\bibfnamefont {A.~J.}\ \bibnamefont
  {Kruchkov}},\ and\ \bibinfo {author} {\bibfnamefont {R.-J.}\ \bibnamefont
  {Slager}},\ }\bibfield  {title} {\bibinfo {title} {Non-hermitian boundary
  modes and topology},\ }\href {https://doi.org/10.1103/PhysRevLett.124.056802}
  {\bibfield  {journal} {\bibinfo  {journal} {Phys. Rev. Lett.}\ }\textbf
  {\bibinfo {volume} {124}},\ \bibinfo {pages} {056802} (\bibinfo {year}
  {2020})}\BibitemShut {NoStop}%
\end{thebibliography}%

\widetext
\pagebreak
\begin{center}
\textbf{\large Supplementary Materials: {Sequential topology: iterative topological phase transitions in finite chiral structures}}
\end{center}
\setcounter{equation}{0}
\setcounter{figure}{0}
\setcounter{table}{0}
\setcounter{page}{1}
\setcounter{section}{0}
\renewcommand\thesection{\Alph{section}}
\renewcommand\thesubsection{\thesection.\arabic{subsection}}
\makeatletter
\renewcommand{\theequation}{S\arabic{equation}}
\renewcommand{\thefigure}{S\arabic{figure}}
\renewcommand{\bibnumfmt}[1]{[S#1]}
\renewcommand{\citenumfont}[1]{S#1}

\section{A triangular number of constraint maps} \label{TriangularNumber}

\noindent In order demonstrate that, for many chiral structures, a triangular number of constraint maps are necessary to increase the degeneracy of zero energy states we first show that, at a minimum, a triangular number of constraint maps are necessary. We then provide a protocol that reaches is minimum in the Appendix section \ref{ConstriantMapProtocol}. To show that, for many chiral structures, a minimum of a triangular number of constraint maps are necessary, we first make an assumption about how zero energy states localise in the chiral block of a Hamiltonian. We call this the \textit{assumption of support} defined in definition \ref{TriangularNumberConstraintsCondition}. We will go beyond this assumption later on. \\
\indent Recall that in a chiral structure with $M$ sections, the Hamiltonian is necessarily possible to permute so that the chiral block has a block triangular form. That is
\begin{equation} \label{TriangularBlockFormOfQ}
Q = \begin{pmatrix}
q_1 & \cdots & c_{1,M-1} & c_{1,M} \\
& \ddots & \vdots & \vdots \\
& & q_{M-1} & c_{M-1,M} \\
\text{\huge0} & & & q_M \\
\end{pmatrix}.
\end{equation}
Recall that the section $g_i$ is \textit{lower} than $g_j$ if $i<j$ and that $g_j$ is \textit{upper} than $g_i$.

\begin{definition} \label{TriangularNumberConstraintsCondition}
For each pair of sections $q_i,q_j$ (with the partial relation $i<j$) then if $|q_j|=0$ and $q_i$ has non-zero support of the nullstate from $q_j$, then we can write the nullstate of the matrix $Q$ as
\begin{equation}
\ket{\Psi^j} = \begin{pmatrix}
A_1\psi^j_1 \\
A_2\psi^j_2 \\
\vdots \\
A_{j-1}\psi^j_{j-1} \\
\psi^j_j \\
0 \\
\vdots \\
0
\end{pmatrix}.
\end{equation}
Where $A_j$ is some non zero matrix, and $\psi^j_j$ is the nullstate of $q_j$. Let this be known as the \textit{assumption of support}.
\end{definition}

\begin{prop} \label{eachsectionindependent}
Consider a finite chiral structure where the nullity of each matrix block $q_i$ is less than or equal to one. Assume that a Hamiltonian has $2n-2$ zero energy states, and that the assumption of support holds. 
Selecting a non-singular section $g_n$ to increase the number of zero energy states to $2n$, then there are at least $n$ further constraints to get $2n$ zero energy states.
\end{prop}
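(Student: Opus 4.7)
The plan is to directly count the unavoidable algebraic conditions on the hopping terms for two new zero energy states to exist, by analysing back-substitution in the upper-triangular block form of $Q$. The two new states, one on each sublattice by chiral symmetry, must extend a nullvector of $q_n$ (respectively $q_n^\dagger$) to a nullvector of the full matrix, and each row of this extension contributes a constraint precisely when the corresponding diagonal block is singular.

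First I would set up the black-sublattice nullstate $\ket{\Psi^n}$ originating from $g_n$. By the triangular block form of $Q$ it has support only on sections $g_i$ with $i \leq n$, and the $n$-th block-row of $Q\ket{\Psi^n}=0$ reads $q_n\psi_n^n=0$. Since $\psi_n^n$ cannot vanish (otherwise the state does not originate from $g_n$), this forces $|q_n|=0$, an unavoidable single constraint. I would then examine the back-substitution $q_i\psi_i^n = -\sum_{j=i+1}^{n} c_{i,j}\psi_j^n$ for each $i<n$. When $q_i$ is non-singular the equation is uniquely solvable and contributes no new constraint. When $q_i$ is singular, the nullity-one hypothesis tells us $\mathrm{col}(q_i)$ has codimension one, so requiring the right-hand side to lie in this subspace yields exactly one independent inter-section constraint; the assumption of support guarantees this right-hand side is non-zero, so the constraint is not vacuous. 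A parallel argument applied to $Q^\dagger$ and the white-sublattice state $\ket{\Phi^n}$ contributes one further inter-section constraint for each singular section above $g_n$, with $|q_n^\dagger|=0$ automatic from $|q_n|=0$.

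To finish, let $a$ denote the number of singular sections below $g_n$ in the partial ordering and $b$ the number above; by hypothesis $a+b=n-1$. The black nullstate forces $a$ inter-section constraints, the white nullstate forces $b$, and $|q_n|=0$ contributes one, for a total of $1+a+b=n$ unavoidable constraints. The hard part will be ruling out that some of these $n$ constraints are consequences of the others. Here I would argue that $|q_n|=0$ involves only hopping terms internal to $g_n$, whereas each inter-section constraint mixes entries of a distinct connecting block $c_{i,j}$ with entries of a distinct singular $q_i$; since all hopping terms are algebraically independent variables, each constraint cuts the parameter space $\xi$ by one further codimension, and the non-vanishing ensured by the assumption of support excludes degenerate collapses between constraints. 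This establishes the lower bound of $n$.
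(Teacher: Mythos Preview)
Your argument is correct and close in spirit to the paper's, but the bookkeeping is organised differently. The paper phrases the $n-1$ inter-section conditions as orthogonality of the new right nullvector $\ket{\Psi^n}$ of $Q$ with each of the $n-1$ existing \emph{left} nullvectors $\bra{\Psi_L^j}$ of $Q$, then argues independence by noting (via the assumption of support) that each $\ket{\psi^j}$ can be moved separately by perturbing its own diagonal block $q_j$. You instead split the count across the two sublattices: back-substitution in $Q$ yields one constraint per singular section \emph{below} $g_n$, and back-substitution in $Q^{\dagger}$ yields one per singular section \emph{above} $g_n$, giving $a+b=n-1$ inter-section constraints plus $|q_n|=0$.

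Your decomposition buys clarity in the general case where $g_n$ is not the uppermost singular block. Taken literally, the paper's orthogonality condition $\bra{\Psi_L^j}\ket{\Psi^n}=0$ is automatically zero whenever $j>n$ (the supports $\{i\geq j\}$ and $\{i\leq n\}$ are disjoint), so those equations carry no content; the paper's protocol in the next section recovers the missing ``upper'' constraints by instead asking that the \emph{existing} nullvectors from $g_j$ with $j>n$ remain solvable at the newly singular row $n$. Your $Q^{\dagger}$ argument encodes exactly the same conditions (one checks that ``new left nullvector from $g_n$ extends to $g_j$'' and ``old right nullvector from $g_j$ survives row $n$'' give the same scalar equation $(\phi_n^n)^{\dagger}\sum_{i>n}c_{n,i}\psi_i^j=0$), so the two routes define the same variety. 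One small point worth tightening: you assert that the new white-sublattice nullvector must originate from $g_n$; this is true, but the reason is that each $q_i$ has nullity at most one, so with $n$ singular diagonal blocks the only way $Q^{\dagger}$ can reach nullity $n$ is for every one of them, including $q_n^{\dagger}$, to contribute.
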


\begin{proof}
Suppose $Q$ has $n-1$ (non-defective) null vectors, resulting $2n-2$ zero energy states of $H$. Index each nullvector by $\ket{\Psi^j}$ and left null vector by $\bra{\Psi^j_L}$ when the section $j$ is the origin of this (left) null vector. For a new null vector from a section $g_n$ with the chiral block $q_n$ and null vector $\ket{\Psi^n}$ to satisfy $Q$ we need $\bra{\Psi_L^j}\ket{\Psi^n} = 0$ for all left null vectors $\bra{\Psi_L^j}$. This ensures the relevant component of $\ket{\Psi^n}$ is in the column space of each singular section, and defines a set of $n-1$ equations.

\indent Under the assumption of support, for each pair of sections $q_i,q_j$ with the partial relation $i<j$ then each $q_i$ has non-zero support of the nullstate from $q_j$ given by $A_{i}\ket{\psi^j_i}$. 
Therefore, every constraint to ensure the orthogonality of the original $n-1$ left null vectors of $Q$ to this additional null vector depend individually on each $\ket{\psi^j}$. Each $\ket{\psi^j}$ can be changed by each section's chiral block $q_j$ so we can alter each constraint independently. That is, the orthogonality condition for each left null vector and the new null vector is independent,
\begin{equation}
\bra{\Psi_L^j}\ket{\Psi^n} = 0 \not\Rightarrow \bra{\Psi_L^i}\ket{\Psi^n} = 0.
\end{equation}
So under the assumption of support there are at least $n-1$ independent equations of this form, and one from $|q_n|=0$ giving at least $n$ algebraically independent constraint maps that need to be satisfied to increase the nullity of $H$. \\
\indent Iterating this bound on the number of constraints, we see there are at least $\sum_1^n m = n(n+1)/2$ total constraints to get a nullity of $2n$.
\end{proof}

\indent Now we have a minimum number of constraints to get a certain nullity in our $N\mathbb{Z}_2$ structure, when the conditions in definition \ref{TriangularNumberConstraintsCondition} is fulfilled, 
follows the triangular numbers. In the following section we will present our protocol that exactly satisfy this lower bound. This protocol therefore provides a constructive proof that for most finite chiral structures, the sequential classification is non-trivial with the application of one less than a triangular number of constraint maps. 

\begin{rem}
As discussed in the main text, when the assumption of support is broken, then it can be possible to require fewer than a triangular number of constraint maps to increase the number of zero energy states. This occurs in one of two cases. The first case is when a prior constraint map prevents a zero energy state originating originating from an upper section to a lower section. The second case is when two or more sections are equal in partial ordering, this automatically ensures zero energy states from each of the equal sections will have no support on one another.
\end{rem}

\section{A protocol for finding constraint maps} \label{ConstriantMapProtocol}

\noindent We now present a protocol that exactly meets the minimum bound for the number of constraint maps discussed in section \ref{TriangularNumberConstraintsCondition}. For this protocol it is important to note that how zero energy states localise as a consequence of the triangular block form of $Q$ in equation \eqref{TriangularBlockFormOfQ}. That is, when a section $|q_j|=0$ is singular, then under the assumption of support, for a zero energy states originating from $q_j$ to be defined over all of $Q$ a zero energy state originating from $g_j$ also has support on $q_i$. Conversely when $|q_i|=0$, then under the assumption of support zero energy states originating from $q_i$ do not have support on $q_j$.

\indent We denote by $c_{j,i}$ the matrix connecting the $j$th section to the $i$th section in $Q$, and $\phi^j_i$ is the support of the null vector originating from the $j$th section on the $i$th section. For this protocol we select a non-singular section $g_n$, to differentiate zero energy states originating from $g_n$ we specifically use $\psi^n_i$ is the support of the nullvector from $g_n$ on the $i$th section. For the chiral block of a section, $q_i$, then $q_i^{l\neq k}$ is $q_i$ with the $k$th column deleted. \\
\indent The protocol for calculating the necessary constraint maps is as follows:
\begin{enumerate} \label{ProtocolForHigherOrderTopology}
\item Require $q_n$ to be non-singular (for now). Let $\bar{q}_n$ be the biadjacency matrix of $g_n$ where $|\bar{q}_n|=0$. 
\item Compute the set of null vectors that satisfy $Q$.
\item Start with the section $g_j$ most below $g_n$.
\begin{enumerate}
\item Compute
\begin{equation} \label{constraint1}
\bar{\phi}^j_n = \sum_{i>n} c_{n,i}\phi^j_i
\end{equation}
where the sum is taken over all the upper sections of $g_n$.
\item Delete a column $\bar{q}_n^k$ of $\bar{q}_n$, replacing it with $\bar{\phi}^j_n$. Compute
\begin{equation}
\det \begin{pmatrix}
\bar{q}_n^{l\neq k} & \bar{\phi}^j_n
\end{pmatrix}.
\end{equation}
\item Choose an irreducible factor of this equation, this defines a constraint map. Satisfy this on the Hamiltonian.
\item Repeat, increasing the sections via the partial ordering, until the constraint maps are satisfied for all the upper singular sections of $g_n$.
\end{enumerate}
\item Set $|q_n|=0$
\end{enumerate}

\noindent Now we turn to the sections lower than $g_n$.

\begin{enumerate}
\item We now require $|q_n| = 0$, so set $q_n=\bar{q_n}$. 
\item Begin the the section $g_j$ that is the least above $g_n$.
\begin{enumerate}
\item Compute
\begin{equation} \label{constraint2}
\bar{\psi}^n_i = \sum_{g_n<g_i<g_j} c_{j,i}\psi^n_i
\end{equation}
where the sum is taken over all the sections lower than $g_n$ but upper of $g_j$.
\item Delete a column $q_j^k$ of $q_j$, replacing it with $\bar{\psi}^n_i$. Compute
\begin{equation}
\det \begin{pmatrix}
\bar{q}_j^{l\neq k} & \bar{\psi}^n_i
\end{pmatrix}
\end{equation}
\item Choose an irreducible factor of this equation, this defines a constraint map. Satisfy this constraint on the Hamiltonian.
\item Repeat, for all singular increasingly lower sections $g_i$. 
\end{enumerate}
\end{enumerate}

\noindent Each constraint map of the form of \eqref{constraint1} exactly satisfies the condition for the relevant component of the zero energy states from all upper sections to be in the column space of $g_n$. Furthermore each constraint map of the form of \eqref{constraint2} exactly satisfies the condition for the relevant component of the zero energy state from $g_n$ to be in the column space of every lower section. This defines $n-1$ constraint maps, so along with the constraint map $|q_n|=0$ exactly satisfies the minimum number of necessary constraint maps under the assumption of support. \\
\indent As discussed above, the assumption of support is broken precisely when a prior constraint map prevents a zero energy state having support on a singular section, or when two (or more) sections are equal by the partial ordering. In both cases this automatically sets the term in equations \eqref{constraint1} and \eqref{constraint2} to zero, and so this protocol accounts for these two cases. Consequently, the protocol automatically finds the optimal (that is, minimum) number of necessary constraint maps to find a new topological classification/increase the degeneracy of zero energy modes.

\noindent This procedure defines a set of constraint maps that increase the degeneracy of zero energy states by two. Even though this procedure defines an order to satisfy constraint maps, once we have the constraint maps we can apply them in any order. If all but one constraint map is satisfied, then the Hamiltonian is confined to a topologically non-trivial subspace of $\xi$. Formally this subspace has a non-trivial zeroth homotopy group if the relevant higher steps' phase boundary is removed. \\
\indent Note that the classification if then given by the number of non-trivial irreducible factors of the final constraint map (whichever constraint map is chosen).

\section{Further experimental details} \label{ExpDets}

\indent As each homotopic sequence falls in to two types (illustrated in the diagrams \eqref{Seq1} and \eqref{Seq2}) we consider two paths through the homotopic structure, one path for each type. The first path uses boundary operators for $g_2$ then $g_1$ then $g_3$. The constraint maps were satisfied in the following order:
\begin{equation}
\begin{split}
|q_2| = gj-ih&=0 \\
|q_1| = ad - bc &= 0 \\
eb-fa&=0 \\
kh-lg &= 0 \\
|q_3| = mp - on &= 0.
\end{split}
\end{equation}
This sequence is experimentally corroborated with the LDOS data in Fig. \ref{LDOSexpsresults}. \\
\indent The second path uses boundary operators for $g_3$ then $g_1$ then $g_2$. Boundary maps were taken in the following order:
\begin{equation}
\begin{split}
|q_3| = mp - on &= 0 \\
|q_2| = gj-ih&=0 \\
eb-fa&=0 \\
|q_1| = ad - bc &= 0 \\
kh-lg&=0
\end{split}
\end{equation}
This sequence is experimentally corroborated with the LDOS data in Fig. \ref{LDOSexpsresults_OtherSequence}.

\indent We also note that both sequences, \eqref{Seq1} and \eqref{Seq2}, are possible to continuously map to one another on a topological phase boundary without increasing the number of zero energy states. This is a general feature of sequential topology, and is a consequence of constraint maps defining connected subspaces. We can incorporate this property into our diagram with vertical arrows. That is, 
\begin{equation} \label{homotopicsequences}
\begin{tikzcd}
3\mathbb{Z}_2 \arrow{d}\arrow{r} & \mathbb{Z}_2 \arrow{d}\arrow{r} & 0 \arrow{dr}\arrow{r} & 0 \arrow{r} & \mathbb{Z}_2 \arrow{r} & 0 \arrow{d} \\
3\mathbb{Z}_2 \arrow{r} & 0 \arrow{r} & \mathbb{Z}_2 \arrow{r} & 0 \arrow{r} & \mathbb{Z}_2 \arrow{r} & 0.
\end{tikzcd}
\end{equation}
This diagram incorporates the entire sequential classification of this structure.

\begin{figure}

\begin{tikzpicture}

\draw (5.75,1.25+0.6) -- (5.75,-5.25);
\draw (-3+0.5,1.25+0.6) -- (-3+0.5,-5.25);
\draw (14,1.25+0.6) -- (14,-5.25);

\draw (-2.5,-5.25) -- (14,-5.25);
\draw (-2.5,1.25+0.6) -- (14,1.25+0.6);
\draw (-2.5,-1.7-1.15) -- (14,-1.7-1.15);

\draw (-2.5,-1.7-1.15+2.3) -- (14,-1.7-1.15+2.3);


\node (a) at (-3+0.75,1.25+0.35) {(a)};
\node (a) at (-3+0.75,-1.7-1.15+2.3-0.25) {(b)};
\node (a) at (-3+0.75,-1.7-1.15-0.25) {(c)};

\node (a) at (6,1.25+0.35) {(f)};
\node (a) at (6,-1.7-1.15+2.3-0.25) {(e)};
\node (a) at (6,-1.7-1.15-0.25) {(d)};

\draw [-To,line width=2pt] (5.75-0.15,-4) -> (5.75+0.25,-4);


\node (0) at (1+0.5+0.05,0+0.6+0.1) {

\begin{tikzpicture}[scale=0.7]



\node (0) at (0-5,0-6) {};
\node (1) at (1-5,1-6) {$\times$};
\node (2) at (1-5,-1-6) {};
\node (3) at (2-5,0-6) {$\times$};


\draw[dashed] (0) -- (1);
\draw (1)--(3);
\draw[dashed] (3)--(2);
\draw (2)--(0);

\node (4) at (4-5,0-6) {$\times$};
\node (5) at (5-5,1-6) {$\times$};
\node (6) at (5-5,-1-6) {};
\node (7) at (6-5,0-6) {};


\draw[dashed] (6)--(4);
\draw (4)--(5);
\draw[dashed] (5)--(7);
\draw (7)--(6);

\node (8) at (8-5,0-6) {};
\node (9) at (9-5,1-6) {$\times$};
\node (10) at (9-5,-1-6) {};
\node (11) at (10-5,0-6) {$\times$};


\draw[dashed] (8)--(10);
\draw (10)--(11);
\draw[dashed] (11)--(9);
\draw (9)--(8);

\draw[dashed] (1)--(4);
\draw (4)--(2);

\draw (7)--(9); 
\draw[dashed] (7)--(10);




\draw[color = red, fill=white] (0-5,0-6) circle [x radius=0.0287, y radius=0.0287, rotate=0]; 
\draw[color = red, fill=red] (1-5,-1-6) circle [x radius=0.0056, y radius=0.0056, rotate=0]; 

\draw[color=red,fill=red] (1-5+4,-1-6) circle [x radius=0.1021, y radius=0.1021, rotate=0]; 
\draw[color=red,fill=white] (2-5+4,0-6) circle [x radius=0.0126, y radius=0.0126, rotate=0]; 

\draw[color=red,fill=white] (0-5+4+4,0-6) circle [x radius=0.04, y radius=0.04, rotate=0]; 
\draw[color=red,fill=red] (1-5+4+4,-1-6) circle [x radius=0.007, y radius=0.007, rotate=0]; 


\end{tikzpicture}

};


\draw [-To,line width=2pt] (1.5,-1.7-1.15+2.3+0.15) -> (1.5,-1.7-1.15+2.3-0.25);

\node (1) at (1+0.5+0.07,-2+0.3+0.03) {

\begin{tikzpicture}[scale=0.7]



\node (0) at (0-5,0-6) {};
\node (1) at (1-5,1-6) {$\times$};
\node (2) at (1-5,-1-6) {};
\node (3) at (2-5,0-6) {$\times$};


\draw[dashed] (0) -- (1);
\draw (1)--(3);
\draw[dashed] (3)--(2);
\draw (2)--(0);

\node (4) at (4-5,0-6) {};
\node (5) at (5-5,1-6) {$\times$};
\node (6) at (5-5,-1-6) {};
\node (7) at (6-5,0-6) {$\times$};


\draw[dashed] (6)--(4);
\draw (4)--(5);
\draw[dashed] (5)--(7);
\draw (7)--(6);

\node (8) at (8-5,0-6) {};
\node (9) at (9-5,1-6) {$\times$};
\node (10) at (9-5,-1-6) {};
\node (11) at (10-5,0-6) {$\times$};


\draw[dashed] (8)--(10)--(11);
\draw (11)--(9)--(8);

\draw[dashed] (1)--(4);
\draw (4)--(2);

\draw[dashed] (7)--(9); 
\draw[dashed] (7)--(10);




\draw[red,fill=white] (0-5,0-6) circle [x radius=0.0305, y radius=0.0305, rotate=0]; 
\draw[fill=red] (1-5,-1-6) circle [x radius=0.0066, y radius=0.0066, rotate=0]; 

\draw[red,fill=white] (0-5+4,0-6) circle [x radius=0.0119, y radius=0.0119, rotate=0]; 
\draw[fill=red] (1-5+4,-1-6) circle [x radius=0.0034, y radius=0.0034, rotate=0]; 

\draw[fill=white] (0-5+4+4,0-6) circle [x radius=0.4217, y radius=0.4217, rotate=0]; 
\draw[fill=black] (1-5+4+4,-1-6) circle [x radius=0.1687, y radius=0.1687, rotate=0]; 


\end{tikzpicture}

};

\draw [-To,line width=2pt] (1.5,-1.7-1.15+0.15) -> (1.5,-1.7-1.15-0.25);

\draw [-To,line width=2pt]  (10,-1.7-1.15-0.15) -> (10,-1.7-1.15+0.25);

\draw [-To,line width=2pt]  (10,-1.7-1.15+2.3-0.15) -> (10,-1.7-1.15+2.3+0.25);

\node (2) at (1+0.5+0.07,-4-0.05) {

\begin{tikzpicture}[scale=0.7]

\node (0) at (0-5,0-6) {};
\node (1) at (1-5,1-6) {};
\node (2) at (1-5,-1-6) {$\times$};
\node (3) at (2-5,0-6) {$\times$};


\draw[dashed] (0) -- (1);
\draw[dashed] (1)--(3);
\draw (3)--(2);
\draw (2)--(0);

\node (4) at (4-5,0-6) {};
\node (5) at (5-5,1-6) {$\times$};
\node (6) at (5-5,-1-6) {};
\node (7) at (6-5,0-6) {$\times$};


\draw[dashed] (6)--(4);
\draw (4)--(5);
\draw[dashed] (5)--(7);
\draw (7)--(6);

\node (8) at (8-5,0-6) {};
\node (9) at (9-5,1-6) {$\times$};
\node (10) at (9-5,-1-6) {};
\node (11) at (10-5,0-6) {$\times$};


\draw[dashed] (8)--(10)--(11);
\draw (11)--(9)--(8);

\draw (1)--(4);
\draw (4)--(2);

\draw[dashed] (7)--(9); 
\draw[dashed] (7)--(10);




\draw[fill=black] (1-5,1-6) circle [x radius=0.1333, y radius=0.1333, rotate=0];
\draw[fill=white] (0-5,0-6) circle [x radius=0.4376, y radius=0.4376, rotate=0]; 

\draw[fill=black] (1-5+4,-1-6) circle [x radius= 0.0553, y radius= 0.0553, rotate=0]; 
\draw[red,fill=white] (0-5+4,0-6) circle [x radius=0.0065, y radius=0.0065, rotate=0]; 

\draw[fill=white] (0-5+4+4,0-6) circle [x radius= 0.1135, y radius= 0.1135, rotate=0]; 
\draw[fill=black] (1-5+4+4,-1-6) circle [x radius=0.4259, y radius=0.4259, rotate=0]; 


\end{tikzpicture}

};

\node (3) at (10+0.07,-4-0.05) {

\begin{tikzpicture}[scale=0.7]

\node (0) at (0-5,0-6) {};
\node (1) at (1-5,1-6) {};
\node (2) at (1-5,-1-6) {$\times$};
\node (3) at (2-5,0-6) {$\times$};


\draw[dashed] (0) -- (1);
\draw[dashed] (1)--(3);
\draw (3)--(2);
\draw (2)--(0);

\node (4) at (4-5,0-6) {};
\node (5) at (5-5,1-6) {$\times$};
\node (6) at (5-5,-1-6) {};
\node (7) at (6-5,0-6) {$\times$};


\draw[dashed] (6)--(4);
\draw (4)--(5);
\draw[dashed] (5)--(7);
\draw (7)--(6);

\node (8) at (8-5,0-6) {};
\node (9) at (9-5,1-6) {$\times$};
\node (10) at (9-5,-1-6) {};
\node (11) at (10-5,0-6) {$\times$};


\draw[dashed] (8)--(10)--(11);
\draw (11)--(9)--(8);

\draw (1)--(4);
\draw (4)--(2);

\draw (7)--(9); 
\draw[dashed] (7)--(10);




\draw[fill=white] (0-5,0-6) circle [x radius=0.4223, y radius=0.4223, rotate=0]; 
\draw[fill=black] (1-5,1-6) circle [x radius=0.1505, y radius=0.1505, rotate=0]; 

\draw[red,fill=white] (0-5+4,0-6) circle [x radius=0.0070, y radius=0.0070, rotate=0]; 
\draw[fill=black] (1-5+4,-1-6) circle [x radius= 0.0673, y radius= 0.0673, rotate=0]; 

\draw[fill=white] (0-5+4+4,0-6) circle [x radius=0.4305, y radius=0.4305, rotate=0]; 
\draw[fill=black] (1-5+4+4,-1-6) circle [x radius=0.1681, y radius=0.1681, rotate=0]; 


\end{tikzpicture}
};

\node (4) at (10+0.03,-2+0.3+0.1) {\begin{tikzpicture}[scale=0.7]

\node (0) at (0-5,0-6) {};
\node (1) at (1-5,1-6) {};
\node (2) at (1-5,-1-6) {$\times$};
\node (3) at (2-5,0-6) {$\times$};


\draw[dashed] (0) -- (1);
\draw[dashed] (1)--(3);
\draw (3)--(2);
\draw (2)--(0);

\node (4) at (4-5,0-6) {};
\node (5) at (5-5,1-6) {$\times$};
\node (6) at (5-5,-1-6) {};
\node (7) at (6-5,0-6) {$\times$};


\draw[dashed] (6)--(4);
\draw (4)--(5);
\draw[dashed] (5)--(7);
\draw (7)--(6);

\node (8) at (8-5,0-6) {};
\node (9) at (9-5,1-6) {$\times$};
\node (10) at (9-5,-1-6) {};
\node (11) at (10-5,0-6) {$\times$};


\draw[dashed] (8)--(10)--(11);
\draw (11)--(9)--(8);

\draw[dashed] (1)--(4);
\draw (4)--(2);

\draw (7)--(9); 
\draw[dashed] (7)--(10);




\draw[fill=white] (1-5,1-6) circle [x radius=0.4323, y radius=0.4323, rotate=0]; 
\draw[fill=black] (0-5,0-6) circle [x radius=0.1642, y radius=0.1642, rotate=0]; 

\draw[red,fill=black] (1-5+4,-1-6) circle [x radius=0.0242, y radius=0.0242, rotate=0]; 
\draw[red,fill=white] (0-5+4,0-6) circle [x radius=0.0073, y radius=0.0073, rotate=0]; 

\draw[fill=white] (0-5+4+4,0-6) circle [x radius=0.4359, y radius=0.4359, rotate=0]; 
\draw[fill=black] (1-5+4+4,-1-6) circle [x radius=0.1675, y radius=0.1675, rotate=0]; 


\end{tikzpicture}

};

\node (5) at (10+0.02,0+0.6+0.1) {

\begin{tikzpicture}[scale=0.7]

\node (0) at (0-5,0-6) {};
\node (1) at (1-5,1-6) {};
\node (2) at (1-5,-1-6) {$\times$};
\node (3) at (2-5,0-6) {$\times$};


\draw[dashed] (0) -- (1);
\draw[dashed] (1)--(3);
\draw (3)--(2);
\draw (2)--(0);

\node (4) at (4-5,0-6) {$\times$};
\node (5) at (5-5,1-6) {$\times$};
\node (6) at (5-5,-1-6) {};
\node (7) at (6-5,0-6) {};


\draw (6)--(4)--(5);
\draw[dashed] (5)--(7)--(6);

\node (8) at (8-5,0-6) {};
\node (9) at (9-5,1-6) {$\times$};
\node (10) at (9-5,-1-6) {};
\node (11) at (10-5,0-6) {$\times$};


\draw[dashed] (8)--(10)--(11);
\draw (11)--(9)--(8);

\draw[dashed] (1)--(4);
\draw (4)--(2);

\draw (7)--(9); 
\draw[dashed] (7)--(10);




\draw[fill=white] (0-5,0-6) circle [x radius=0.496, y radius=0.496, rotate=0]; 
\draw[fill=black] (1-5,1-6) circle [x radius=0.1702, y radius=0.1702, rotate=0]; 

\draw[fill=white] (2-5+4,0-6) circle [x radius=0.0919, y radius=0.0919, rotate=0]; 
\draw[fill=black] (1-5+4,-1-6) circle [x radius=0.4268, y radius=0.4268, rotate=0]; 

\draw[fill=white] (0-5+4+4,0-6) circle [x radius=0.4514, y radius=0.4514, rotate=0]; 
\draw[fill=black] (1-5+4+4,-1-6) circle [x radius=0.1689, y radius=0.1689, rotate=0]; 


\end{tikzpicture}
};


\end{tikzpicture}
\caption{The local density of states, calculated as an integral of the experimentally measured impedance spectra over an energy window of $\varepsilon\in [-0.102,0.102]$ with a corresponding frequency range of 111 MHz to 126 MHz. The range was chosen to account for shifts in LDOS maxima, arising from variance in cable length and from T-connectors. 50$\Omega$ RG58 cables are indicated with a dashed line, and 93$\Omega$ RG62 cables with a solid line. Two sites were measured on each section, with an $\times$ depicting unmeasured sites. On measured sites the diameter of the circle is proportional to the calculated integral, rescaled to the maximum measured value out of the entire sequence (the white site on $g_2$ of the final structure). To distinguish a true zero energy state from broadened low lying states, red indicates if the LDOS at $\varepsilon=0$ had a local minima in the experimentally measured LDOS. The arrows denote the position of the structure to the corresponding classification in the sequence \eqref{Seq1}. On a topological phase boundary, a section that has support of a zero energy state on both a black site and a white site indicates two zero energy states originate from that section allowing the number of zero energy states, singular sections, and if the structure is on a higher steps phase boundary to be inferred. Reading off the figure, (b), (c), (f) are on a topological phase boundary, confirming the predicted sequential classification. The non-zero value calculated in (a) is a consequence of broadened peaks of low lying states in the measured spectra, with a local minima found at $\varepsilon=0$ in experiment.
}
\label{LDOSexpsresults_OtherSequence}
\end{figure}
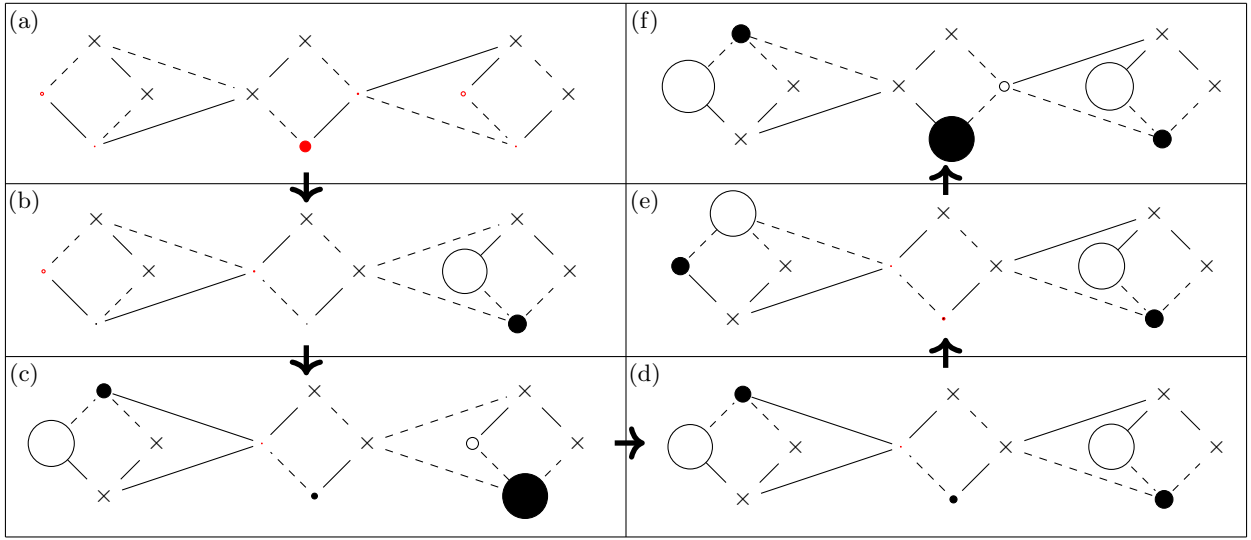

\end{document}